\documentclass[journal]{IEEEtran}

\usepackage{mathtools}
\usepackage{amsmath}
\usepackage{color,soul}			
\usepackage{graphicx}	
\usepackage{tikz}
\usepackage{amssymb}
\usepackage{amsthm}  
\usepackage{subfig,balance,ulem}
\usepackage{cite}

\usepackage{tikz}
\usetikzlibrary{shapes,arrows}

\newtheorem{theorem}{Theorem}
\newtheorem{corollary}{Corollary}
\newtheorem{lemma}{Lemma}
\newtheorem{definition}{Definition}
\newtheorem{example}{Example}
\newtheorem{proposition}{Proposition}

\newcommand\LG{\mathord{\mathcal{L}}}
\newcommand\LM{\mathord{\mathcal{L}}_m}
\newcommand\LANG{\mathord{\mathcal{L}}}%Patrícia

%\setstcolor{magenta}

\begin{document}

\title{Secure Recovery Procedure for Manufacturing Systems using Synchronizing Automata and Supervisory Control Theory}

\author{L. V. R. Alves,          
        P. N. Pena %etc.

\thanks{Lucas V. R. Alves is with
              Technical College and Graduate Program in Electrical Engineering,
              Universidade Federal de Minas Gerais 
              (e-mail: lucasvra@ufmg.br) 
}%
\thanks{
           Patr\'icia N. Pena is with
              Department of Electronics Engineering, 
              Universidade Federal de  Minas Gerais 
              (e-mail: ppena@ufmg.br)     
}%
\thanks{This work has been supported by the Brazilian agency CAPES, CNPq and Fapemig.}
\thanks{This work has been submitted to the IEEE for possible publication. 
Copyright may be transferred without notice, after which this version may no longer be accessible.}}%
%\date{Received: date / Accepted: date}
% The correct dates will be entered by the editor

\maketitle

\begin{abstract}
Manufacturing systems may be subject to external attacks and failures, so it is important to deal with the recovery of the system after these situations. This paper deals with the problem of recovering a manufacturing system, modeled as a Discrete Event System (DES) using the Supervisory Control Theory (SCT), when the control structure, called supervisor, desynchronizes from the physical plant.
The desynchronization may be seen as plant and supervisor being in uncorresponding states. The recovery of the system may be attained if there is a word, the synchronizing word, that regardless the state of each one of them, brings the system and supervisor back to a known state. The concepts of synchronizing automata are used to do so. In this paper we show under what conditions a set of synchronizing plants and specifications leads to a synchronizing supervisor obtained by the Supervisory Control Theory. The problem is extended to cope with multiple supervisors, proposing a local recovery when possible. We also present a simple way to model problems, composed of machines and buffers, as synchronizing automata such that it is always possible do restore synchronization between the control (supervisor) and the plant.

\end{abstract}

\def\abstractname{Note to Practitioners}
\begin{abstract}
Given the unpredictability of faults and malicious attacks occurring in industrial systems, recovery strategies are crucial for a harmonic operation of the plant. The possibility of leading the system to a known state, recovering control, is of extreme importance to the safety of industrial processes. The method proposed in this paper uses well known concepts of Supervisory Control Theory (SCT) of Discrete Event Systems (DES), introducing the recovery process (using recovery events) in the modeling phase such that it is possible to isolate and fix only the part of the control system subject to the fault.
The result of the proposed approach allows the implementation of such control system with the recovery procedure directly in the Programmable Logic Controllers (PLCs).

\end{abstract}

\begin{IEEEkeywords}
Discrete Event Systems, \and Synchronizing Automata, \and Supervisory Control Theory, \and  Recovery Procedure.
\end{IEEEkeywords}

\IEEEpeerreviewmaketitle

\section{Introduction}\label{intro}

Fault recovery is an essential part of a modern manufacturing system. Most of the data in \textit{Smart Plants} is accessed over real-time communication networks, so, in addition to worrying about sensor and actuator failures, we also need to take into account malicious attacks to the system. In computational systems, such problems can be solved restarting the software, but in industries, because of safety and reliability constraints, this restart cannot be naive \cite{Abad2016}.

In the Supervisory Control Theory (SCT), the supervisor restricts the dynamics of the system inhibiting the execution of controllable events in order to guarantee a safe operation of the system. As shown in Fig.~\ref{fig:attack}, the supervisor estimates the current state of the plant by observing the occurrence of events, however this observation is susceptible to problems originated by malicious attacks and communication problems, leading the system to a situation where the physical state of the plant does not correspond to the state estimated by the supervisor. In some situations, the observations made by the supervisor can be corrupted as the list of allowed events sent to the plant.

Starting in the decade of 2000, the increase in the exchange of information in digital environment increases the concern with the security of computational systems \cite{Saboori2007}. Every system with communication among its agents, as between plant and supervisor, is susceptible to attacks.

Such attacks are becoming more sophisticated having as their main objectives to steal information, extortion and sabotage \cite{Tankard201116, Beuhring2014}. 
APT - \textit{Advanced Persistent Threat} are pieces of software developed to attack specific targets \cite{Virvilis2013} and stay hidden in these systems for long periods of time.

Another cause of problems in manufacturing systems are the failures in sensors, actuators and communication systems. Most of the information that travels in intelligent manufacturing systems is accessed by real time communication networks \cite{Christofides2007} and this information may be corrupted or lost.

The problems of recovery of Discrete Event Systems can be divided into three sub-problems \cite{Loborg1994}:
\begin{enumerate}
    \item Detection: Consists in detecting discrepancies between the state of the system and the specifications/supervisor \cite{Carvalho2018}.
    \item Diagnostic: Consists in detecting the fault that generated the discrepancy. In DES, this problem may be handled using techniques of diagnosability using automata models of Discrete Event Systems \cite{Lafortune2018}.
    \item Recovery: After eliminating the cause of the fault, the malicious agent or faulty parts, the recovery may be about changing the state of the system and supervisor to be consistent.
\end{enumerate}

Shu~\cite{Shu2014} deals with the recovery of manufacturing systems firing \textit{recovery events} when an event sequence leads the system to a faulty mode. These \textit{recovery events} cannot be disabled by the supervisor and they are used by the supervisor in order to recover the system. On the other hand, Andersson and coauthors~\cite{Andersson2009, Andersson2010, Andersson2011}, Bergagard and coauthors~\cite{Bergagard2013, Bergagard2015} present a method to restart manufacturing systems, modeled using operations and \textit{coordination of operations} (COP), after unforeseen errors using the notion of \textit{restart states}. In this context, the restart process act by resynchronizing the physical state of a plant with the state of COP.

In this paper, we propose the use of the theoretical development in \textit{Synchronizing Automata} to deal with the problem when the active state of the plant does not match the active state of the supervisor, after the system suffers an attack of a malicious agent or after a fault. In this sense, we consider that the system loses synchronization when the active control state does not correspond to the active physical state.

\tikzstyle{block1} = [draw, fill=blue!5, rectangle, minimum height=3em, minimum width=6em]
\tikzstyle{block2} = [draw, fill=blue!5, rectangle, minimum height=3em, minimum width=3em]
\tikzstyle{pinstyle} = [pin edge={to-,thin,black}]

\begin{figure}
    \centering
    \begin{tikzpicture}[auto, node distance=2cm,>=latex]
    % We start by placing the blocks
    \node[block1] at (2, 3) (supervisor) {Supervisor};
    \node[block1] at (2, 0) (plant) {Plant};
    \node[block2] at (0, 1.5) (obs1) {Attacker};
    \node[block2] at (4, 1.5) (obs2) {Attacker};
    
   \draw [->] (plant) node[right, align=center, xshift=2cm, yshift=0.5cm] {Observations} -| (obs2);
   \draw [->] (obs2)  node[right, align=center, xshift=0cm, yshift=1.5cm] {Corrupt\\Observations} |- (supervisor);
   \draw [->] (supervisor) node[left, align=center, xshift=-2cm, yshift=-0.0cm] {Allowed\\Events} -| (obs1);
   \draw [->] (obs1) node[left, align=center, xshift=0cm, yshift=-1.0cm] {Corrupt\\Allowed Events} |- (plant);
    
    \end{tikzpicture}
    \caption{System under attack or failure}
    \label{fig:attack}
\end{figure}
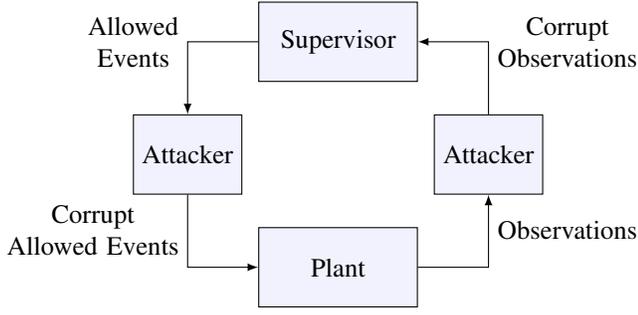

Differently from the techniques presented in \cite{Andersson2009, Andersson2010, Andersson2011, Bergagard2013, Bergagard2015}, if it is possible to model the system components as synchronizing automata then it is always possible to restart the system, and there is no need to insert additional components. We, also, present a simple method to model conventional problems of SCT as synchronizing automata inserting recovery events, similar to those presented in \cite{Shu2014}.

An automaton is said \textit{synchronizing} when there is a word, called \textit{synchronizing word}, that, when executed by the automaton, leads to the same state, regardless of the state of origin. So, two identical automata, in different states, will always evolve to the same state when a \textit{synchronizing word} is executed \cite{Volkov2008}.

The existence of a synchronizing word has applications in many fields, such as robotics, assembling, loading and packing of products \cite{Natarajan1986,Natarajan1989}. More theoretical development was presented in the context of industrial automation \cite{Eppstein1988,Goldberg1993,Chen1995}. Synchronizing automata were also applied to problems with partial observability \cite{Larsen2014} and problems modeled with Petri Nets \cite{Pocci2013,Pocci20141,Pocci20142,Pocci2016}.

The contributions of the paper are summarized. First, we present how the synchronizing automata coexist with the Supervisory Control Theory and in which cases the synchronization is maintained after the synthesis of a controllable and nonblocking supervisor. Then, we show how to turn automata that  model the plants and specifications into synchronizing automata using recovery events (alike \cite{Shu2014}) and how they can be used to resynchronize supervisor and plant.
Then, we expand the obtained results to Local Modular Supervisory Control, allowing partial recovery of the system, pointed out as a future challenge in \cite{Andersson2011}.

This paper is organized such that Section \ref{sec:pre} has the preliminaries, where we show the main concepts needed to understand the results. Section~\ref{sec:prob} states the problem this paper aims to solve. Section \ref{sec:main} presents the main results, where we present conditions under which synchronization survives the syntheses of supervisors in both in the Monolithic Supervisory Control and the Local Modular Supervisory Control. In Section \ref{sec:example}, an example is presented showing how synchronizing automata can be used in discrete event systems. The conclusions are in Section \ref{sec:conclusions}.

\section{Preliminaries}\label{sec:pre}

In this section, we summarize some fundamental concepts and results of the Supervisory Control Theory (SCT) of Ramadge and Wonham~\cite{RW89}, that are needed for the theoretical development of the paper. We, also, define some concepts and notation on the \textit{synchronizing automata}.
\subsection{Languages and Automata}
Let~$ \Sigma $ be a finite nonempty set of \textit{events}, referred to as an \textit{event set}. Behaviors of DES are modeled by finite words over~$ \Sigma $. The set of all finite words composed of events in $ \Sigma $, including the empty word~$\varepsilon$, is denoted by $ \Sigma^{*} $. A subset~$L \subseteq \Sigma^{*} $ is called a \textit{language}. The \textit{concatenation} of words $s,u \in \Sigma^*$ is written as~$su$.
A word~$s \in \Sigma^{*}$ is called a \textit{prefix} of $t \in \Sigma^{*}$, written $s \leq t$, if there exists $u \in \Sigma^{*}$ such that $su=t$. The \textit{prefix-closure} $\overline{L}$ of a language~$L \subseteq \Sigma^{*}$ is the set of all prefixes of words in~$L$, i.e., $\overline{L} = \{\, s \in \Sigma^{*} \mid s \leq t\ \mbox{for some}\ t \in L\,\}$.
A common operation over words and languages is the natural projection. Given two event sets $\Sigma$ and $\Sigma_i$, such that $\Sigma_i \subseteq \Sigma$, the natural projection $P_{\Sigma \to \Sigma_i}: \Sigma^*\rightarrow\Sigma_i^*$ is defined as:
\begin{align*}
P_{\Sigma \to \Sigma_i}(\epsilon)\phantom{s} &= \epsilon\\
P_{\Sigma \to \Sigma_i}(\sigma)  \phantom{s} &=  \begin{cases} 
   \sigma & \text{if } \sigma \in \Sigma_i \\
   \epsilon  & \text{if } \sigma \in \Sigma \setminus \Sigma_i
  \end{cases}\\
P_{\Sigma \to \Sigma_i}(s \sigma) &= P_{\Sigma \to \Sigma_i}(s)P_{\Sigma \to \Sigma_i}(\sigma) \text{, } s \in \Sigma^*, \sigma \in \Sigma.
\end{align*}

The inverse projection maps a word built from an event set $\Sigma_i$ to a language in the event set $\Sigma$ as:
$$ P^{-1}_{\Sigma \to \Sigma_i}(t) = \{ s \in \Sigma^* \,|\, P_{\Sigma \to \Sigma_i}(s) = t \}. $$

Both operations can be extended to operate over languages. For $L \subseteq \Sigma^*$:
$$ P_{\Sigma \to \Sigma_i}(L) = \{ t \in \Sigma_i^* \,|\, (\exists s \in L) \, [P_{\Sigma \to \Sigma_i}(s) = t] \}. $$

For $L \subseteq \Sigma_i^*$:
$$ P^{-1}_{\Sigma \to \Sigma_i}(L) = \{ s \in \Sigma^* \,|\, (\exists t \in P_{\Sigma \to \Sigma_i}(L)) \, [P_{\Sigma \to \Sigma_i}(s) = t] \}.$$

\begin{definition} 
A Deterministic Finite Automata (DFA) is a 5-tuple $G = (Q, \Sigma, \delta, q_0, Q_m)$, where $Q$ is a finite set of states, $\Sigma$ is an event set, $\delta : Q \times \Sigma \to Q$ is the transition function, $q_0 \in Q$ is the initial state and $Q_m \subseteq Q$ is the set of marked states. \qed
\end{definition}
The transition function can be  extended to recognize words over $\Sigma^*$ as $\delta(q, \sigma s) = q'$ with $\delta(q, \sigma) = x$ and $ \delta(x, s) = q' $. 

The execution of a word $s$ in a state $q$, $\delta(q,s)$, is denoted by the concatenation $q.s$. The same notation is used to represent this operation over sets. The notation $A.s$ denotes the set of destination states when the word $s$ is executed from the set of states $A\subseteq Q$. 

The active event function, defined by $\Gamma : Q \to 2^{\Sigma}$, is, given a state $q$, the set of events $\sigma \in \Sigma$ for which $\delta(q,\sigma)$ is defined.

The generated and marked languages are, respectively, $\mathcal{L}(G) = \{s \in \Sigma^* | q_0.s = q' \land q' \in Q \}$ and $\mathcal{L}_m(G) = \{s \in \Sigma^* | q_0.s = q' \land q' \in Q_m \}$. Another language is defined to include words  starting in any state $q$ of $G$ as $\mathcal{L}_G(q) = \{s \in \Sigma^* | q.s = q' \land q, q' \in Q \}$ such that $\mathcal{L}_G(q_0) = \mathcal{L}(G)$. An automaton is said to be nonblocking if $\overline{\mathcal{L}_m(G)} = \mathcal{L}(G)$.
\begin{definition}
Let $G_1 = (Q_1, \Sigma_1, \delta_1, q_{01}, Q_{m1})$ and $G_2 = (Q_2, \Sigma_2, \delta_2, q_{02}, Q_{m2})$ be two automata. The parallel composition of $G_1$ and $G_2$, denoted by $G_{12}=G_1||G_2$ is:
$$G_{12} = (Q_1 \times Q_2, \Sigma_1 \cup \Sigma_2, \delta_{12}, (q_{01},q_{02}), Q_{m1} \times Q_{m2})$$
where 
\footnotesize
\begin{equation*}
\delta((q_1, q_2), \sigma) = \begin{cases}
(\delta_1(q_1,\sigma),\delta_2(q_2,\sigma)), & \text{if $\sigma \in \Gamma_1(q_1) \cap \Gamma_2(q_2)$} \\
(\delta_1(q_1,\sigma),q_2), & \text{if $\sigma \in \Gamma_1(q_1) \backslash \Sigma_2$} \\
(q_1,\delta_2(q_2,\sigma)), & \text{if $\sigma \in \Gamma_2(q_2) \backslash \Sigma_1$} \\
\mbox{undefined}, & \text{otherwise}.
\end{cases}
\end{equation*}
\normalsize

Also, let $P_{\Sigma_1 \cup \Sigma_2\to\Sigma_1}:(\Sigma_1 \cup \Sigma_2)^*\to\Sigma_1^*$  and $P_{\Sigma_1 \cup \Sigma_2\to\Sigma_2}:(\Sigma_1 \cup \Sigma_2)^*\to\Sigma_2^*$ be natural projections:

\begin{align}
\mathcal{L}(G_{12}) &= P^{-1}_{\Sigma_1 \cup \Sigma_2\to\Sigma_1}(\mathcal{L}(G_{1})) \cap P^{-1}_{\Sigma_1 \cup \Sigma_2\to\Sigma_2}(\mathcal{L}(G_{2}))\nonumber\\
\mathcal{L}_m(G_{12}) &= P^{-1}_{\Sigma_1 \cup \Sigma_2\to\Sigma_1}(\mathcal{L}_m(G_{1})) \cap P^{-1}_{\Sigma_1 \cup \Sigma_2\to\Sigma_2}(\mathcal{L}_m(G_{2})).\nonumber
\end{align}\qed
\end{definition}

\subsection{Supervisory Control Theory}
The Supervisory Control Theory is a formal method, based on language and automata theory, to the systematic calculus of supervisors \cite{W2014}. The system to be controlled is called \textit{plant}, the controller agent is called \textit{supervisor} and the control problem is to find a supervisor which enforces the specifications in a minimally restrictive way.
The plant is modeled by an automaton $G = (\_, \Sigma, \_, \_, \_)$ %$G = (Q, \Sigma, \delta, q_0, Q_m)$ 
and $\Sigma = \Sigma_c \cup \Sigma_u$ where $\Sigma_c$ is the set of controllable events, which can be disabled by an external agent, and $\Sigma_u$ is the set of uncontrollable events, which cannot be disabled by an external agent. The plant represents the logical model of the DES, the system behavior under no control action. The supervisor's $S$ role is to regulate the plant behavior to meet a desired behavior $K$ disabling controllable events.
Let $ E $ be an automaton that represents the specification imposed on $ G $.
We say that $ K = \LANG_{m}(G \parallel E) \subseteq \LANG_{m}(G) $ is \textit{controllable} w.r.t. $G$ if $ \overline{K} \Sigma_{uc} \cap \LANG(G) \subseteq \overline{K} $.
A nonblocking supervisor $ V $ for $ G $ such that $ \LANG_{m}(V/G)= K $ exists if and only if $ K $ is controllable w.r.t. $ G $.
If $ K $ does not satisfy the condition, then the \textit{supremal controllable and  nonblocking sublanguage} $ Sup\, \mathcal{C} (K,G) $ can be synthesized.
It represents the least restrictive nonblocking supervisor.
For $G$ and $K$, a monolithic supervisor automaton $S $ can be computed to represent $ Sup\, \mathcal{C} (K,G) $ such that $ \LANG_{m} (S)= Sup\, \mathcal{C} (K,G) \subseteq K $.

The generated and marked language of a plant $G$ under the action of a supervisor $S$ are, respectively, $\mathcal{L}(S/G)$ and $\mathcal{L}_m(S/G) \subseteq \mathcal{L}(S/G)$.

The space explosion of the monolithic supervisor synthesis can be avoided using decentralized techniques, as the Local Modular Supervisory Control \cite{Queiroz2002} where one supervisor is synthesized for each specification, and each one of the supervisors has only a partial view of the plant. The \textit{global plant} $G$ is composed of $n$ sub-plants $H_i, i\in\{1 \ldots n\}$, such that their event sets $\Sigma_{H_i}$ are disjoint and $G = ||_{i = 1}^{n}H_i$, also the \textit{global specification} $E$ is composed of $m$ sub-specifications $E_j, j\in\{1\ldots m\}$, such that their event sets are represented by $\Sigma_{E_j}$ and $E = ||_{j = 1}^{m}E_j$. A \textit{local plant} $G_j$ is such that $G_j = ||_{a \in A_j}H_a$ with $A_j = \{ i \in \{1 \ldots n\} | \Sigma_{H_i} \cap \Sigma_{E_j} \neq \emptyset \}$.

In the Local Modular Control, the local supervisor $S_j = Sup\, \mathcal{C} (K_j,G_j)$, where $K_j = \LANG_{m}(G_j \parallel E_j)$. Each supervisor is nonblocking by construction, but their combined behaviour has to be nonblocking in order to have the same behavior than the monolithic control solution. To check if supervisors are nonconflicting, the equality in  (\ref{eq:nonconflictOriginal}) must be verified.

\begin{equation}\label{eq:nonconflictOriginal}
    ||^m_{j=1} \overline{\LM(S_j)} = \overline{\LM(||^m_{j=1} S_j)}.
\end{equation}

\subsection{Synchronizing Automata}

The original definition of a synchronizing automaton \cite{Volkov2008} is presented and the idea is extended to be used in the context of Supervisory Control Theory. A synchronizing deterministic finite automaton is a DFA that has a word that, when executed from any state of the automaton, leads to a known state.

\begin{definition}\label{def:SyncDFA}
\cite{Volkov2008} A complete automaton $G = (Q, \Sigma, \delta, \_,\_)$ is synchronizing if and only if for any pair of states $q,q' \in Q$ there exists a word $w\in \Sigma^*$, called \textit{synchronizing word}, such that $q.w = q'.w$, $\forall q,q'\in Q$. \qed
\end{definition}

A \textit{complete automaton} in the definition refers to an automaton with a complete transition function, that is, transitions labeled with all the events in the event set are available in each state. Also, the initial state is irrelevant to the original property, so it is intentionally omitted in the following example.

\begin{example}\label{ex:ConvetionalSA}  Consider the synchronizing automaton $A=(Q,\Sigma,\_,\_,\_)$ of Fig.~\ref{fig:syncOrig}. The word $w = ab^3ab^3a$ leads the automaton to state 1, regardless the origin state. Using the notation established before, $Q.w=1$, $Q=\{0,1,2,3\}$. It is straightforward that any word $sw$, $s \in \Sigma^*$, also leads the automaton to state 1. 

\begin{figure}[h]
    \centering
    \begin{center}
\begin{tikzpicture}[scale=0.13]
\tikzstyle{every node}+=[inner sep=0pt]
\draw [black] (28.3,-15.4) circle (3);
\draw (28.3,-15.4) node {$0$};
\draw [black] (45.9,-15.4) circle (3);
\draw (45.9,-15.4) node {$1$};
\draw [black] (28.3,-32.3) circle (3);
\draw (28.3,-32.3) node {$3$};
\draw [black] (45.9,-32.3) circle (3);
\draw (45.9,-32.3) node {$2$};
\draw [black] (31.3,-15.4) -- (42.9,-15.4);
\fill [black] (42.9,-15.4) -- (42.1,-14.9) -- (42.1,-15.9);
\draw (37.1,-15.9) node [below] {$a,b$};
\draw [black] (45.9,-18.4) -- (45.9,-29.3);
\fill [black] (45.9,-29.3) -- (46.4,-28.5) -- (45.4,-28.5);
\draw (45.4,-23.85) node [left] {$b$};
\draw [black] (42.9,-32.3) -- (31.3,-32.3);
\fill [black] (31.3,-32.3) -- (32.1,-32.8) -- (32.1,-31.8);
\draw (37.1,-31.8) node [above] {$b$};
\draw [black] (28.3,-29.3) -- (28.3,-18.4);
\fill [black] (28.3,-18.4) -- (27.8,-19.2) -- (28.8,-19.2);
\draw (28.8,-23.85) node [right] {$b$};
\draw [black] (46.567,-12.487) arc (194.8285:-93.1715:2.25);
\draw (51.11,-9.67) node [above] {$a$};
\fill [black] (48.62,-14.16) -- (49.52,-14.44) -- (49.26,-13.47);
\draw [black] (48.454,-33.851) arc (86.47119:-201.52881:2.25);
\draw (50.47,-38.49) node [below] {$a$};
\fill [black] (46.22,-35.27) -- (45.67,-36.04) -- (46.67,-36.1);
\draw [black] (28.356,-35.288) arc (28.79888:-259.20112:2.25);
\draw (24.5,-38.91) node [below] {$a$};
\fill [black] (25.96,-34.16) -- (25.02,-34.11) -- (25.5,-34.99);
\end{tikzpicture}
\end{center}
    \caption{Example \ref{ex:ConvetionalSA}- Conventional synchronizing automaton \cite{Volkov2008}.}
    \label{fig:syncOrig}
\end{figure}
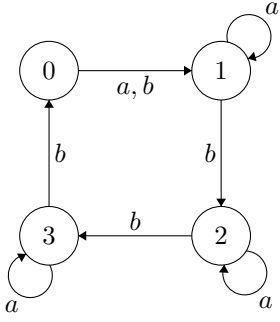

\qed
\end{example}
 
If the word $w$ is a synchronizing word, the operation $Q.w$ results in a singleton set. Also, the set of all synchronizing words of an automaton $G$ is denoted by $Syn(G)$: 
$$Syn(G)=\{w\in \Sigma^*| \; |Q.w|=1 \}.$$

\section{Problem Statement}\label{sec:prob}

Let $G$ be a manufacturing system modeled as a discrete event system under supervision of a supervisor $S$, obtained through Supervisory Control Theory, $S\subseteq {\cal L}_m(G)$.  Consider that such control system (plant and supervisor) lose synchronization due to a failure or an attack. In such a case, $\exists s\in \Sigma^*$, $\exists\sigma\in \Sigma$, such that $s\sigma\in{\cal L}_m(G)\cap S$, however, due to a failure or attack, the transition with $\sigma$ is not ``communicated'' to the supervisor  and the current state of the plant does not match the state estimated by the supervisor.  
Propose a method to resynchronize the control system (plant and supervisor), based on synchronizing automata. 

\section{Main Results}\label{sec:main}

The main idea of this work is to adapt the Supervisory Control Theory to deal with synchronizing automata, such that the features of this model can be used to solve desynchronization that may be caused by attacks or failures. In order to do so, we organize this section into four subsections. First we extend the concept of synchronizing automata to a more specific class that is the synchronizing automata w.r.t. the initial state and present some properties (Section \ref{sec:Basic}). Then, we present how synchronizing automata w.r.t. the initial state behave under some automata operations (Section \ref{sec:Operations}) and we show how these automata can be used in the context of Supervisory Control Theory (Section \ref{sec:SCT}). Finally, we present a method to model conventional DES problems as synchronizing automata w.r.t. the initial state (Section \ref{sec:Rec}).

\subsection{Basic Definitions}\label{sec:Basic}

When modeling a system, it is common to use partial transition functions and work with the language starting at the initial state. So, the idea of synchronization makes more sense when defined in relation to the initial state. In Definition \ref{def:SyncDFAIS} a new class of synchronizing automata is presented, the \textit{synchronizing automata w.r.t. the initial state}. In this new definition of synchronicity, the initial state cannot be omitted. %

\begin{definition}\label{def:SyncDFAIS}

An automaton $G = (Q, \Sigma, \_, q_0, \_)$ is synchronizing w.r.t. the initial state if there exists a word $w \in\Sigma^*$, called synchronizing word, such that $Q . w = \{q_0\}$. \qed

\end{definition}

In words, $G$ is a synchronizing automata w.r.t. the initial state if for any state $q\in Q$ of $G$, there is a word $w$ such that $q.w=q_0$. %$\delta(q,w) = q_0$. 
The set of synchronizing words w.r.t. the initial state of an automaton $G$ is represented by $Syn_{q_0}(G)$. In order to simplify the notation, we define that $Syn_{q_0}(G)=I$.

\begin{example} \label{ex:mi}
Let $A=(\_,\Sigma_1, \_, \_, \_)$ be an automaton with two states, in Fig.~\ref{fig:mi}. The word $w=c\in\Sigma_1^*$ is the shortest of the synchronizing words of $A$ and the automaton is a synchronizing automaton w.r.t. the initial state.

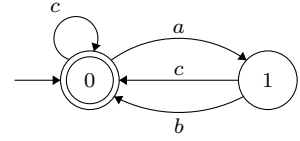
\begin{figure}[htbp]
     \begin{center}
        \begin{tikzpicture}[scale=0.13] \footnotesize
            \tikzstyle{every node}+=[inner sep=0pt]
            \draw [black] (25.7,-18.2) circle (3);
            \draw (25.7,-18.2) node {$0$};
            \draw [black] (25.7,-18.2) circle (2.4);
            \draw [black] (43.9,-18.2) circle (3);
            \draw (43.9,-18.2) node {$1$};
            \draw [black] (27.88,-16.151) arc (125.92933:54.07067:11.792);
            \fill [black] (41.72,-16.15) -- (41.37,-15.28) -- (40.78,-16.09);
            \draw (34.8,-13.41) node [above] {$a$};
            \draw [black] (41.412,-19.866) arc (-62.24811:-117.75189:14.199);
            \fill [black] (28.19,-19.87) -- (28.66,-20.68) -- (29.13,-19.8);
            \draw (34.8,-22) node [below] {$b$};
            \draw [black] (18,-18.2) -- (22.7,-18.2);
            \fill [black] (22.7,-18.2) -- (21.9,-17.7) -- (21.9,-18.7);
            \draw [black] (40.9,-18.2) -- (28.7,-18.2);
            \fill [black] (28.7,-18.2) -- (29.5,-18.7) -- (29.5,-17.7);
            \draw (34.8,-17.7) node [above] {$c$};
            \draw [black] (23.598,-16.076) arc (252.43495:-35.56505:2.25);
            \draw (22.13,-11.24) node [above] {$c$};
            \fill [black] (26.11,-15.24) -- (26.83,-14.63) -- (25.87,-14.33);
        \end{tikzpicture}
    \end{center}
    \caption{Example~\ref{ex:mi}: Synchronizing automaton $A$}
    \label{fig:mi}
\end{figure}
\qed
\end{example}

Proposition \ref{prop:lang} demonstrates some properties of synchronizing automata w.r.t. the initial state, regarding their languages, adapted from \cite{Volkov2008}.

\begin{proposition}\label{prop:lang}
Let $G = (\_, \Sigma, \_, q_0, \_)$ be a synchronizing automaton w.r.t. the initial state and $I\neq\emptyset$. Then:  
\begin{enumerate}
    \item[a)] $\LG(G)I\LG(G) \subseteq \LG(G)$;
    \item[b)] $\LG(G) I \LM(G) \subseteq \LM(G)$;
    \item[c)] $\LM(G) I \LM(G) \subseteq \LM(G)$.
\end{enumerate}
The language $\LM(G)$ is called a \textit{synchronizing language}.
\end{proposition}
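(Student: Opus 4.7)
\smallskip
\noindent\textbf{Proof plan.} The strategy is direct: unfold each inclusion by tracking the state trajectory and exploit the fact that any word $w \in I$ acts as a ``reset'' to $q_0$ from every state. Since, by Definition~\ref{def:SyncDFAIS}, $w \in I$ means $Q.w = \{q_0\}$, any occurrence of $w$ immediately after an initial prefix drops the current state to $q_0$, after which any generated/marked word is available again.

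\smallskip
\noindent\textbf{Part (a).} The plan is to fix arbitrary $s \in \LG(G)$, $w \in I$, and $t \in \LG(G)$, and show $swt \in \LG(G)$ by constructing the corresponding run. Since $s \in \LG(G)$, let $q' = q_0.s \in Q$. Since $w \in I$, in particular $q'.w = q_0$ (this is where the synchronizing-w.r.t.-the-initial-state hypothesis is used pointwise). Since $t \in \LG(G)$, $q_0.t$ is defined. Composing these three legs yields $q_0.(swt)=q_0.t$, so $swt$ is generated.

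\smallskip
\noindent\textbf{Parts (b) and (c).} For (b), I would repeat the argument of (a) verbatim, only strengthening the conclusion: if $t \in \LM(G)$ then $q_0.t \in Q_m$, so the endpoint of the run for $swt$ is marked, giving $swt \in \LM(G)$. Part (c) is then immediate from (b) together with the trivial inclusion $\LM(G) \subseteq \LG(G)$, so no fresh argument is needed.

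\smallskip
\noindent\textbf{Main obstacle.} The only delicate point is the interpretation of $Q.w = \{q_0\}$ when the transition function is partial: one needs that for every $q \in Q$, $q.w$ is defined and equals $q_0$, so that in particular the specific $q' = q_0.s$ arising in the argument admits $q'.w = q_0$. Once this reading of Definition~\ref{def:SyncDFAIS} is adopted (which is the natural extension of the complete-automaton case cited from \cite{Volkov2008}), the three inclusions collapse to the one-line run-composition above; the rest of the proof is bookkeeping.
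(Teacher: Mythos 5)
Your proposal is correct and follows essentially the same route as the paper's proof: both arguments rest on the single observation that $q_0.s=q'$ and $w\in I$ force $q'.w=q_0$, so that the suffix from $\LG(G)$ (resp.\ $\LM(G)$) can be re-run from the initial state, with part (c) obtained from (b) via $\LM(G)\subseteq\LG(G)$. Your elementwise run-composition is just a cleaner phrasing of the paper's language-concatenation bookkeeping, and your remark on reading $Q.w=\{q_0\}$ as ``defined and equal to $q_0$ for every $q$'' under a partial transition function makes explicit an assumption the paper uses silently.
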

\begin{proof}

Any word $s \in \LG(G)$ leads to a state $q$ ($q_0.s = q$) that, when followed by a word $w\in I$, reaches state $q_0$ ($q.w=q_0$), from Definition \ref{def:SyncDFA}. So, $\forall s\in  \LG(G)$, $\forall w\in I$ $q_0.sw = q_0$.

\begin{enumerate}
    \item[a)]Let $s\in\LG(G)$, and $w \in I$, then:
    \begin{equation}
    sw\in \LG(G)I\subseteq\LG(G)\nonumber
    \end{equation}
    and $q_0.sw=q_0$.
We also know that, for any automaton $G$, ${\cal L}_G(q_0)= \LG(G)$ and we can concatenate $\LG(G)$ to both sides and find: 
\begin{equation}
    sw\LG(G)\subseteq\LG(G)I\LG(G)\subseteq\LG(G)\LG(G).\label{eq:uw}
\end{equation}
(\ref{eq:uw}) becomes: 
\begin{equation}
   \LG(G)I\LG(G)\subseteq\LG(G).
\end{equation}
proving a).
 \item[b)] Let $u\in\LG(G)$, and $w \in I$, then:

\begin{equation}
    uw\in \LG(G)I\nonumber
\end{equation}
and $q_0.uw=q_0$.
We also know that, for any automaton $G$, ${\cal L}_G(q_0) \cap \LM(G)= \LM(G)$. So we can concatenate $\LM(G)$ to $uw$ and:

\begin{equation}
   uw\LM(G)\subseteq\LG(G)I\LM(G)\nonumber
\end{equation}
Since $\forall s\in \LG(G)I$, $q_0.s=q_0$, then:
\begin{equation}
   uw\LM(G)\subseteq\LG(G)I\LM(G)\subseteq\LM(G).\nonumber
\end{equation}
proving b).

 \item[c)]Given that:
\begin{equation}
    \LM(G) \subseteq \LG(G)\nonumber
\end{equation}
we have:
\begin{equation}
    \LM(G)I\LM(G)\subseteq\LM(G).\nonumber
\end{equation}
\end{enumerate}
\end{proof} 

A synchronizing automaton w.r.t. the initial state is synchronizing to any state if it is also accessible, given that it is always possible to lead any state to the initial state and then to any other state. 

\begin{corollary}\label{cor:block}
If $G = (Q, \_, \_, q_0, \_)$ is a synchronizing automaton w.r.t. the initial state and every state of $G$ is accessible then:
\begin{enumerate}
    \item[a)] $G$ is a synchronizing automaton w.r.t. any state $q'\in Q$;
    \item[b)] $G$ is coaccessible.
\end{enumerate} 
\end{corollary}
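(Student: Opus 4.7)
The plan is to derive both parts by using the synchronizing word as a universal reset to $q_0$, then composing with an accessibility path to the desired target. For part (a), I would fix an arbitrary $q' \in Q$ and exhibit a word that synchronizes $G$ to $q'$. By hypothesis, there exists $w \in I$ with $Q.w = \{q_0\}$, and by accessibility of $q'$ there exists $v \in \Sigma^*$ with $q_0.v = q'$. Then $Q.(wv) = (Q.w).v = \{q_0\}.v = \{q'\}$, so $wv$ is a synchronizing word w.r.t.\ $q'$, which is exactly the requirement of Definition~\ref{def:SyncDFAIS} with $q'$ in place of $q_0$.

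For part (b), the argument is analogous but the target is a marked state. Assuming $Q_m \neq \emptyset$ (implicit in the DES setting, since otherwise coaccessibility fails vacuously), pick any $q_m \in Q_m$; accessibility of $q_m$ provides $v \in \Sigma^*$ with $q_0.v = q_m$. Then for every $q \in Q$, $q.(wv) = (q.w).v = q_0.v = q_m$, so every state reaches a marked state and $G$ is coaccessible. In effect, part (b) is just part (a) specialized to target a marked state, which is why the two are bundled into the same corollary.

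The main expected obstacle is not technical but bookkeeping: making precise the implicit assumption behind part (b). Without at least one marked state reachable from $q_0$, no synchronizing word can establish coaccessibility, so the statement tacitly depends on $Q_m \neq \emptyset$ (or equivalently, $\mathcal{L}_m(G) \neq \emptyset$). Once that mild hypothesis is granted, both items reduce to the same one-line composition: the synchronizing word funnels every state into $q_0$, and then an accessibility path carries $q_0$ to the intended destination (an arbitrary $q'$ in part (a), a marked $q_m$ in part (b)).
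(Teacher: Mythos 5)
Your proof is correct and follows essentially the same route as the paper's: a synchronizing word funnels every state into $q_0$, and an accessibility path then carries $q_0$ to the target (an arbitrary $q'$ for part (a), a marked state for part (b)). Your remark that part (b) tacitly requires $Q_m \neq \emptyset$ is a fair point that the paper's own proof also leaves implicit.
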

\begin{proof}
If $G$ is synchronizing w.r.t. the initial state then there is a set $I\neq\emptyset$ such that $$\LG(G) I \LG(G) \subseteq \LG(G),$$ from (Proposition \ref{prop:lang}). 
If $G$ is accessible, for every state $q\in Q$ there is at least a word $u \in \LG(G)$ such that:
\begin{equation}\label{eq:delta}
 q_0.u=q 
\end{equation} 
and since $I\neq\emptyset$, from any state $q'\in Q$, $q'.w=q_0$, with $w \in I$.  From (\ref{eq:delta}), we know that $q'.wu=q$. Then, $Q.wu=\{q\}$, $wu\in I$ and $G$ is synchronizing w.r.t. state $q$, showing item a). If $G$ is accessible, every state $q\in Q_m\subseteq Q$ is reachable from the initial state, $q_0.u=q$, with $u\in\LM(G)$. If $G$ is synchronizing, then there exists $w\in I$ such that $q.w=q_0$ and from $q_0$ all states are reachable. Then, we can conclude that $G$ is coaccessible, showing item b).
\end{proof}

\subsection{Operations with Synchronizing Automata}\label{sec:Operations}

In general, it makes little sense, in the Supervisory Control Theory, to expect a supervisor to be synchronizing when the automata that originate that supervisor are not. So, our strategy is to model the system and specification as synchronizing automata and see under what conditions the synchronization word survives the synthesis procedure. In this context, it is important to analyze how the synchronizing word survives the parallel composition of synchronizing automata.

\begin{lemma}\label{lem:proj}
Let $L \subseteq \Sigma_1^*$ be a synchronizing language and $I$ be the set of all synchronizing words w.r.t. the initial state of $L$ and $P:\Sigma^* \to \Sigma_1^*$, $\Sigma_1\subseteq\Sigma$ then $P^{-1}(L)$ is a synchronizing language.

\begin{proof}To show that $P^{-1}(L)$ is a synchronizing language, we must show that $\exists I_K$, such that $P^{-1}(L) I_K P^{-1}(L) \subseteq P^{-1}(L)$. Since $L$ is a synchronizing language, then $L I L \subseteq L$ (Proposition \ref{prop:lang}). Applying the inverse projection to both sides: 
$$P^{-1}(L I L) \subseteq P^{-1}(L) $$
We can decompose the left side of the expression, resulting in:
$$ P^{-1}(L) P^{-1}(I) P^{-1}(L) \subseteq P^{-1}(L)$$
replacing $P^{-1}(I)=I_K$ and $P^{-1}(L)=B$ we have:
$$ BI_KB \subseteq B. $$
So, $B= P^{-1}(L)$ is a synchronizing language.

\end{proof}

\end{lemma}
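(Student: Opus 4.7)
The plan is to exploit the algebraic characterization of synchronizing languages furnished by Proposition \ref{prop:lang}(c), push the resulting inclusion through the inverse projection $P^{-1}$, and recognize the outcome as the same algebraic condition for the language $P^{-1}(L)$ with synchronizing-word set $P^{-1}(I)$.

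Concretely, I would proceed as follows. First, from the hypothesis that $L$ is a synchronizing language with nonempty set $I$ of synchronizing words, invoke Proposition \ref{prop:lang}(c) to obtain $L\, I\, L \subseteq L$. Second, apply $P^{-1}$ to both sides; since $P^{-1}$ is monotone with respect to inclusion, this gives $P^{-1}(L\, I\, L) \subseteq P^{-1}(L)$. Third, distribute $P^{-1}$ over the concatenation to rewrite the left-hand side as $P^{-1}(L)\, P^{-1}(I)\, P^{-1}(L)$, and introduce the abbreviations $B := P^{-1}(L)$ and $I_K := P^{-1}(I)$, yielding $B\, I_K\, B \subseteq B$. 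Finally, I would check that $I_K \neq \emptyset$: since $P$ restricted to $\Sigma_1^* \subseteq \Sigma^*$ is the identity, every $w \in I$ already lies in $P^{-1}(I)$, so $I \subseteq I_K$ and $I_K$ inherits nonemptiness from $I$. This matches the algebraic signature used in Proposition \ref{prop:lang} for synchronizing languages, so $B = P^{-1}(L)$ is synchronizing.

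The step I expect to be the main obstacle is the distributivity $P^{-1}(AB) = P^{-1}(A)\, P^{-1}(B)$ for $A, B \subseteq \Sigma_1^*$. One inclusion, $P^{-1}(A)\, P^{-1}(B) \subseteq P^{-1}(AB)$, is immediate because $P$ is a monoid homomorphism, so concatenating preimages gives a word whose image is a concatenation. The substantive direction is $P^{-1}(AB) \subseteq P^{-1}(A)\, P^{-1}(B)$: given $s \in \Sigma^*$ with $P(s) = t_1 t_2$, $t_1 \in A$, $t_2 \in B$, one must exhibit a factorization $s = s_1 s_2$ with $P(s_1) = t_1$ and $P(s_2) = t_2$. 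I would obtain this by cutting $s$ right after the position whose $P$-image is the last letter of $t_1$, observing that the ``silent'' letters in $\Sigma \setminus \Sigma_1$ straddling the cut can be attached to either side without disturbing the images. Once this factorization property of natural projections is in hand, the remainder of the argument is a routine symbol chase.
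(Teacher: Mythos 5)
Your proposal follows the paper's proof essentially verbatim: start from $L\,I\,L \subseteq L$ via Proposition~\ref{prop:lang}, apply the monotone map $P^{-1}$, decompose $P^{-1}(LIL)$ into $P^{-1}(L)P^{-1}(I)P^{-1}(L)$, and read off the synchronizing-language condition for $B = P^{-1}(L)$ with $I_K = P^{-1}(I)$. The only difference is that you explicitly justify the distributivity of $P^{-1}$ over concatenation (of which only the inclusion $P^{-1}(A)P^{-1}(B) \subseteq P^{-1}(AB)$ is actually needed here) and the nonemptiness of $I_K$, both of which the paper leaves implicit; this is a welcome tightening rather than a different route.
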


In the context of automata, the inverse projection creates self-loops in all states for each symbol in $\Sigma_2 \setminus \Sigma_1$. It is easy to see that this operation does not turn a synchronizing automaton unsynchronizing, but only increases the number of synchronizing words.

\begin{example} \label{ex:proj}
Let $A=(\_,\Sigma_1, \_,\_,\_)$ be the synchronizing automaton w.r.t the initial state previously presented in Fig.\ref{fig:mi}, $\Sigma_1=\{a,b,c\}$ and $\Sigma=\{a,b,c,x\}$. Consider the natural projection $P:\Sigma\to \Sigma_1$. In Fig.\ref{fig:proj}, an automaton that models the language $P^{-1}(\LM(A))$ is shown. It is easy to see that any word in $I_P=(\Sigma \setminus \Sigma_1)^* c (\Sigma \setminus \Sigma_1)^*c^*(\Sigma \setminus \Sigma_1)^*=x^*cx^*c^*x^*\in\Sigma^*$ is a synchronizing word w.r.t. the initial state of the resulting automaton.
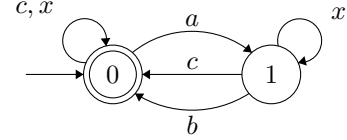
\begin{figure}[h]
    \centering
\begin{center}
\begin{tikzpicture}[scale=0.13]
\tikzstyle{every node}+=[inner sep=0pt]
\draw [black] (29.7,-15.4) circle (3);
\draw (29.7,-15.4) node {$0$};
\draw [black] (29.7,-15.4) circle (2.4);
\draw [black] (45.9,-15.4) circle (3);
\draw (45.9,-15.4) node {$1$};
\draw [black] (26.969,-14.188) arc (273.80557:-14.19443:2.25);
\draw (21.7,-9.72) node [above] {$c,x$};
\fill [black] (29,-12.49) -- (29.45,-11.66) -- (28.45,-11.73);
\draw [black] (42.9,-15.4) -- (32.7,-15.4);
\fill [black] (32.7,-15.4) -- (33.5,-15.9) -- (33.5,-14.9);
\draw (37.8,-14.9) node [above] {$c$};
\draw [black] (43.632,-17.348) arc (-57.30124:-122.69876:10.795);
\fill [black] (31.97,-17.35) -- (32.37,-18.2) -- (32.91,-17.36);
\draw (37.8,-19.56) node [below] {$b$};
\draw [black] (31.69,-13.171) arc (129.33557:50.66443:9.64);
\fill [black] (43.91,-13.17) -- (43.61,-12.28) -- (42.97,-13.05);
\draw (37.8,-10.49) node [above] {$a$};
\draw [black] (20.8,-15.4) -- (26.7,-15.4);
\fill [black] (26.7,-15.4) -- (25.9,-14.9) -- (25.9,-15.9);
\draw [black] (46.614,-12.498) arc (193.89909:-94.10091:2.25);https://pt.overleaf.com/project/5b17bef40a021d6739b48136
\draw (52.8,-9.74) node [above] {$x$};
\fill [black] (48.64,-14.2) -- (49.53,-14.5) -- (49.29,-13.52);
\end{tikzpicture}
\end{center}
    \caption{Example~\ref{ex:proj}: Synchronizing automaton w.r.t. the initial state, modeling the inverse projection of the machine $A$ to $\Sigma$.}
    \label{fig:proj}
\end{figure}
\qed
\end{example}

From the definition of the inverse projection, we can conclude that $I \subset I_K$, so every synchronizing word of $L$ is also a synchronizing word of $P^{-1}(L)$.

Now we deal with the intersection operator.

\begin{lemma}\label{lem:inter}
Let $L_1, L_2\subseteq\Sigma^*$ be synchronizing languages. Let $I_1, I_2$ be, respectively, the sets of synchronizing words of their original automata. If $I_1\cap I_2\neq \emptyset$, then the nonempty language $K = L_1 \cap L_2$ is a synchronizing language and its automaton is synchronizing w.r.t. the initial state.
\end{lemma}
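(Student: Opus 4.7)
The plan is to apply Proposition \ref{prop:lang}(c) as a characterization: to prove $K = L_1 \cap L_2$ is synchronizing it suffices to exhibit a nonempty set $I_K$ of words such that $K I_K K \subseteq K$. The natural candidate is $I_K := I_1 \cap I_2$, which is nonempty by hypothesis. The core language-level computation is then a short, symmetric argument: given $s, t \in K$ and $w \in I_1 \cap I_2$, both $s,t$ belong to $L_1$ and $w \in I_1$, so by Proposition \ref{prop:lang}(c) applied to $L_1$ we get $swt \in L_1 I_1 L_1 \subseteq L_1$. The identical argument with subscripts swapped yields $swt \in L_2$. Hence $swt \in L_1 \cap L_2 = K$, which gives $K I_K K \subseteq K$ with $I_K \neq \emptyset$, so $K$ is a synchronizing language.

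For the automaton part, since both languages live over the same alphabet $\Sigma$, I would take the product automaton (equivalently the parallel composition $G_1 \parallel G_2$ of the two original synchronizing automata) with initial state $(q_{01}, q_{02})$ as the concrete automaton whose marked language is $K$. Any $w \in I_1 \cap I_2$ is, by Definition \ref{def:SyncDFAIS}, defined from every state of each $G_i$ with $Q_i . w = \{q_{0i}\}$, so $w$ is defined from every product state $(q_1, q_2)$ and $(q_1, q_2) . w = (q_{01}, q_{02})$. This destination is precisely the initial state of the product, so $w$ is a synchronizing word w.r.t. the initial state of $G_1 \parallel G_2$, establishing the second claim.

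The hypothesis $I_1 \cap I_2 \neq \emptyset$ is the load-bearing assumption: without it, distinct synchronizing words of the two factors need not compose into a single word that synchronizes the product, and the conclusion can fail. The main conceptual point of the proof is therefore the elementary observation that a word common to both $I_1$ and $I_2$ acts coordinate-wise on the product. The only subtlety worth flagging is that Definition \ref{def:SyncDFAIS} implicitly requires $\delta_i(q, w)$ to be defined on all of $Q_i$ (otherwise $Q_i . w = \{q_{0i}\}$ would be meaningless as a singleton), which is what permits us to apply $w$ uniformly at every state of the product rather than only at the reachable ones; once this is granted, the proof is essentially one line at both the language and the automaton level.
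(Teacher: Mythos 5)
Your proof is correct and follows essentially the same route as the paper's: apply Proposition~\ref{prop:lang} to each $L_i$ separately and intersect, with $I_1\cap I_2$ as the witnessing set of synchronizing words. In fact your version is slightly tighter --- the paper only checks words of the form $sws$ with the same $s$ on both sides before asserting the full containment $(L_1\cap L_2)(I_1\cap I_2)(L_1\cap L_2)\subseteq L_1\cap L_2$, whereas you handle arbitrary $swt$ with $s,t$ possibly distinct, and you also make explicit the coordinate-wise automaton-level argument that the paper leaves implicit.
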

\begin{proof}
From Proposition \ref{prop:lang}, we have that:
\begin{equation}\label{eq:l1il1}
    L_1 I_1 L_1 \subseteq L_1
\end{equation}
\begin{equation}\label{eq:l2il2}
    L_2 I_2 L_2 \subseteq L_2.
\end{equation}
for all $s \in L_1 \cap L_2$ and $w \in I_1 \cap I_2$ is straightforward that $s w s \in L_1 I_1 L_1$ and $s w s \in L_2 I_2 L_2$, so:
\begin{equation}\label{eq:sws}
    sws\in  L_1I_1L_1\cap L_2I_2L_2.
\end{equation}
(\ref{eq:sws}) can be rewritten as:
\begin{equation}
    sws\in  (L_1\cap L_2)(I_1\cap I_2)(L_1\cap L_2)\subseteq L_1\cap L_2.
\end{equation}
then, $(L_1\cap L_2)I_1\cap I_2(L_1\cap L_2)\subseteq L_1\cap L_2$ and $L_1\cap L_2$ is a synchronizing language.

\end{proof}

Using the last two lemmas, it is possible to define conditions under which the parallel composition maintains the synchronicity of the original synchronizing automata. This result is presented in Proposition~\ref{thm:comp}.

\begin{proposition}

\label{thm:comp}
Let $G_1 = (Q_1, \Sigma_1, \delta_1, q_{01}, Q_{m1})$ and $G_2 = (Q_2, \Sigma_2, \delta_2, q_{02}, Q_{m2})$ be synchronizing automata w.r.t. the initial state and let $\Sigma = \Sigma_1 \cup \Sigma_2$. The resulting automaton $G = G_1 || G_2$ is synchronizing w.r.t. the initial state if $P^{-1}_{\Sigma \to \Sigma_1}(I_1) \cap P^{-1}_{\Sigma \to \Sigma_2}(I_2) \neq \emptyset$, with $P_{\Sigma \to \Sigma_i}:\Sigma^*\rightarrow \Sigma_i^*$, $i\in\{1,2\}$.
\end{proposition}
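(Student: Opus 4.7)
The plan is to chain together the two lemmas that have just been proved and then translate the language-level conclusion back into a statement about the automaton. Since $G_1$ and $G_2$ are synchronizing w.r.t.\ their initial states, Proposition~\ref{prop:lang} tells us that $\LM(G_1)$ and $\LM(G_2)$ are synchronizing languages whose sync-word sets are $I_1$ and $I_2$, respectively.

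Next I would lift both languages to the common alphabet $\Sigma = \Sigma_1 \cup \Sigma_2$. By Lemma~\ref{lem:proj}, $P^{-1}_{\Sigma \to \Sigma_1}(\LM(G_1))$ and $P^{-1}_{\Sigma \to \Sigma_2}(\LM(G_2))$ are synchronizing languages, and the argument in that lemma exhibits $P^{-1}_{\Sigma \to \Sigma_i}(I_i)$ as a subset of the corresponding synchronizing word set. The hypothesis of the proposition is precisely that these two lifted sync-word sets intersect, so Lemma~\ref{lem:inter} applies and the intersection
$$ P^{-1}_{\Sigma \to \Sigma_1}(\LM(G_1)) \cap P^{-1}_{\Sigma \to \Sigma_2}(\LM(G_2)) $$
is a synchronizing language. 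By the language-level characterization of parallel composition, this intersection coincides with $\LM(G_1\|G_2)$.

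The step I expect to require the most care is passing from the language-level statement ``$\LM(G)$ is a synchronizing language'' back to the automaton-level statement ``$G$ is synchronizing w.r.t.\ its initial state $(q_{01},q_{02})$''. To pin this down I would pick any word $w$ in the nonempty intersection $P^{-1}_{\Sigma \to \Sigma_1}(I_1) \cap P^{-1}_{\Sigma \to \Sigma_2}(I_2)$ and reason directly from the parallel-composition transition rule: since $P_{\Sigma \to \Sigma_i}(w) \in I_i$, it is defined at every state of $G_i$ and carries $Q_i$ to $\{q_{0i}\}$; hence $w$ is defined at every $(q_1,q_2) \in Q_1 \times Q_2$ and $(q_1,q_2).w = (q_{01},q_{02})$. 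Therefore $(Q_1 \times Q_2).w = \{(q_{01},q_{02})\}$, which is exactly Definition~\ref{def:SyncDFAIS} applied to $G$. The only subtlety to watch for is ensuring that the lifted synchronizing word from the intersection is truly executable from every composed state, which the argument above handles through the projection semantics of $\|$.
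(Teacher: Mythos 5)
Your proof follows essentially the same route as the paper's: apply Lemma~\ref{lem:proj} to lift each marked language to $\Sigma$, invoke Lemma~\ref{lem:inter} under the stated intersection hypothesis, and use the language identity $\mathcal{L}_m(G_1 \parallel G_2) = P^{-1}_{\Sigma \to \Sigma_1}(\mathcal{L}_m(G_1)) \cap P^{-1}_{\Sigma \to \Sigma_2}(\mathcal{L}_m(G_2))$. The only difference is that you explicitly justify the final passage from the language-level conclusion back to the automaton-level statement via the transition rule of the parallel composition, a step the paper simply asserts; this is a useful refinement rather than a different argument.
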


\begin{proof} Because $G_1$ and $G_2$ are synchronizing  w.r.t. the initial state, we know, using Lemma~\ref{lem:proj}, that the languages $P^{-1}_{\Sigma \to \Sigma_1}(\LM (G_1))$ and $P^{-1}_{\Sigma \to \Sigma_2}(\LM (G_2))$ are also synchronizing w.r.t. the initial state.

Considering that $P^{-1}_{\Sigma \to \Sigma_1}(I_1) \cap P^{-1}_{\Sigma \to \Sigma_2}(I_2) \neq \emptyset$ and also that:
$$\LM(G_1||G_2) = P^{-1}_{\Sigma \to \Sigma_1}(\LM(G_{1}))\cap P^{-1}_{\Sigma \to \Sigma_2}(\LM(G_{2}))$$
we have, using Lemma~\ref{lem:inter}, that the language $\LM(G_1||G_2)$ is a synchronizing language and that $G_1||G_2$ is a synchronizing automaton w.r.t. the initial state.
\end{proof}
The next step is to analyze how synchronizing words behave in the synthesis of controllable and nonblocking supervisors using the Supervisory Control Theory.

\subsection{Supervisory Control Theory with Synchronizing Automata}\label{sec:SCT}

The synthesis of a supervisor has 3 main steps: model the open loop behavior and specifications; compute the desired language; synthesize the supremal controllable and nonblocking sublanguage. At this point, we assume that we are able to model and specify using synchronizing automata (we present how in Section \ref{sec:model}).

In the following, we show under what conditions we have a synchronizing language as a result of the synthesis.

\begin{theorem}\label{thm:sct}
Let $G$ be a plant and let $E$ be a specification, both modeled as synchronizing automata w.r.t. the initial state. A nonempty controllable and nonblocking supervisor $S$ such that $S=Sup{\cal C}(K,G)$, $K=G||E$, 
is a synchronizing automata w.r.t. the initial state if $\Sigma_{u}^* \cap I \neq \emptyset$, with $I = Syn_{q_0}(G \parallel E)$.

\end{theorem}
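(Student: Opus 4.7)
The plan is to pull a word $w$ from the nonempty set $\Sigma_u^* \cap I$ and show it is a synchronizing word of $S$ with respect to its initial state. The underlying intuition is that the synthesis $Sup\,\mathcal{C}(K,G)$ can never disable uncontrollable events, so a synchronizing word of $K$ composed entirely of uncontrollable events cannot be cut by the synthesis and hence remains a synchronizing word for $S$.

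First, I would fix any $w \in \Sigma_u^* \cap I$. Since $w \in I = Syn_{q_0}(K)$ with $K = G \parallel E$, the word $w$ is defined at every state of $K$ and drives it back to $q_0$; in particular $\overline{\LM(K)} \cdot \{w\} \subseteq \overline{\LM(K)} \subseteq \LG(G)$, which is the only place I need the fact that $w$ is a synchronizing word of $K$.

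The main step is to prove, by induction on the prefixes $w_k = \sigma_1 \cdots \sigma_k$ of $w = \sigma_1 \cdots \sigma_n$, that $s \in \overline{\LM(S)}$ implies $s w_k \in \overline{\LM(S)}$. The inductive step starts from $s w_k \in \overline{\LM(S)} \subseteq \overline{\LM(K)}$, uses the inclusion above to obtain $s w_k \sigma_{k+1} \in \LG(G)$, and then combines $\sigma_{k+1} \in \Sigma_u$ with the controllability of $S$, namely $\overline{\LM(S)} \Sigma_u \cap \LG(G) \subseteq \overline{\LM(S)}$, to conclude $s w_k \sigma_{k+1} \in \overline{\LM(S)}$. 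This is where the hypothesis $w \in \Sigma_u^*$ is essential, and I expect it to be the only delicate point of the argument.

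To conclude, I would observe that $S$ is realized as a sub-automaton of $K$ sharing the initial state, so any state $q \in Q_S$ is reached from $q_{0S} = q_{0K}$ by some $s \in \overline{\LM(S)}$, and the previous step guarantees that $w$ is defined at $q$ inside $S$ with $q.w$ computed identically in $S$ and in $K$. The latter value is $q_0$, and nonemptiness of $S$ ensures $q_{0S} \in Q_S$, so $Q_S . w = \{q_{0S}\}$, which gives the desired synchronization with respect to the initial state. The only risk in this closing step is the implicit appeal to the automaton-level identification $Q_S \subseteq Q_K$ with matching transitions; this is immediate from the standard fixed-point construction of $Sup\,\mathcal{C}(K,G)$ as a trim of $K$, and if a purely language-level statement is preferred it can be extracted from the same induction via Proposition \ref{prop:lang}.
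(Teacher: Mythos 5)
Your proof is correct and rests on the same essential mechanism as the paper's: a synchronizing word of $K=G\parallel E$ built entirely from uncontrollable events cannot be pruned by the synthesis. The execution is genuinely different, though. The paper argues at the state level with a dichotomy: either $w$ run from an arbitrary state of $K$ never visits a ``bad'' state, in which case $w$ survives verbatim into $S$, or it does visit one, in which case badness propagates backwards along the uncontrollable transitions of $w$ and the paper concludes the supervisor collapses to the empty one (excluded by hypothesis). Your language-level induction on the prefixes of $w$, driven by the controllability inclusion $\overline{\LM(S)}\,\Sigma_u\cap\LG(G)\subseteq\overline{\LM(S)}$, avoids that case split entirely and is the more robust formalization: the paper's case (b) as written is loose (a bad state reached by $w$ from \emph{some} state only eliminates that state and its uncontrollable ancestors, not necessarily all of $Q$), whereas your argument shows directly that every surviving state admits $w$, and the final step identifying $S$ with a trim subautomaton of $K$ makes the passage from the language statement to $Q_S.w=\{q_0\}$ explicit. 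Two small points for the write-up: the inclusion $\overline{\LM(K)}\{w\}\subseteq\overline{\LM(K)}$ should be applied to the base word $s$ (giving $sw\in\overline{\LM(K)}$, whence $sw_k\sigma_{k+1}\le sw$ lies in $\LG(G)$ by prefix-closure), not to $sw_k$, which would only yield membership of $sw_k$ followed by $\sigma_1$; and that inclusion itself uses $\LM(K)\neq\emptyset$ (guaranteed since $S\neq\emptyset$) so that $q_0$ is coaccessible --- for the induction the unconditional $\LG(K)\{w\}\subseteq\LG(K)\subseteq\LG(G)$ already suffices.
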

\begin{proof}
Let $K = G \parallel E = (Q, \Sigma, \_, q_0, Q_m)$ and $S = (Q_s, \Sigma, \_, q_0, Q_{ms})$, where $Q_s \subseteq Q$ and $Q_{ms}=Q_m\cap Q_s$. In the sense of controllability, every state $q_f \in Q \setminus Q_s$ is a bad state, because fails the principle of controllability.

Since $\Sigma_{u}^* \cap I \neq \emptyset$, there is at least a word $w = \sigma_1 \sigma_2 \ldots \sigma_n \in \Sigma_{u}^* \cap I$. There are two possibilities to be considered.
\begin{enumerate}
    \item[a)]  the trace $w$ executed from any state $q\in Q$ does not visit a bad state:\\
    If this is the case, since all states of $K$ that are visited are good states, they will be kept in $S$. So, $w \in I$ is, also, a synchronizing word of $S$. 
    \item[b)]  the trace $w$ executed from any state $q\in Q$ visits a bad state;\\
    When obtaining $S$, states of the automaton that implements $K$ are removed, if they are bad states. If there is a word $w = \sigma_1 \sigma_2 \ldots \sigma_n \in \Sigma_{u}^* \cap I$, where $I = Syn_{q_0}(K)$, then every $q' \in Q$ where $q' \xrightarrow{\sigma_1 ... \sigma_p} q_f \xrightarrow{\sigma_{p+1} ... \sigma_n} q_0$ is also a bad state and is not in $Q_s$, so every state $q$ that leads to a bad state, using uncontrollable events, is also removed, then $w$ is completely removed, indicating that $\LG(S) = \{\epsilon\}$ and the supervisor is empty.
\end{enumerate}

After the bad states removal, the accessible part of the resulting automaton is always coaccessible, using Corollary~\ref{cor:block}, then nonblocking. 
\end{proof}

It is straightforward to apply Theorem \ref{thm:sct} to the Local Modular Supervisory Control of DES \cite{Queiroz2002}.

\begin{corollary} \label{cor:loc}
Let $E_j$ be the local specifications of the system and $G_j=(\_,\Sigma_j,\_,\_,\_)$, be the local plants, with $\Sigma_{uj}\subseteq\Sigma_{j}$ as the uncontrollable events of $G_j$  and $j \in \{1 \ldots m\}$. If $G_j$ and $E_j$ are synchronizing automata w.r.t. the initial state, $\Sigma_{uj}^* \cap I_j \neq \emptyset$, $I_j = Syn_{q_0}(G_j \parallel E_j)$ then the local supervisors $S_j$ are also synchronizing automata w.r.t. the initial state.\end{corollary}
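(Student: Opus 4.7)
The plan is to reduce the corollary directly to Theorem~\ref{thm:sct} applied componentwise. In the Local Modular Supervisory Control framework recalled in Section~\ref{sec:pre}, each local supervisor is defined by $S_j = Sup\,\mathcal{C}(K_j, G_j)$ with $K_j = \LM(G_j \parallel E_j)$, which is structurally identical to a monolithic synthesis carried out over the event set $\Sigma_j$ of $G_j$. So the first step is simply to observe that, for each fixed index $j \in \{1,\ldots,m\}$, the local synthesis problem is a self-contained instance of the setting of Theorem~\ref{thm:sct}, with $G_j$ playing the role of the plant, $E_j$ the specification, and $\Sigma_{uj}$ the uncontrollable event set.

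Next, I would check that the hypotheses of Theorem~\ref{thm:sct} transfer verbatim to each local problem. By assumption, $G_j$ and $E_j$ are synchronizing w.r.t. the initial state, and we are explicitly given $\Sigma_{uj}^* \cap I_j \neq \emptyset$ with $I_j = Syn_{q_0}(G_j \parallel E_j)$. These are exactly the two premises needed by Theorem~\ref{thm:sct}. In particular, the existence of a synchronizing word over uncontrollable events in $G_j \parallel E_j$ already presupposes, via Proposition~\ref{thm:comp}, that the projections of the local synchronizing words of $G_j$ and $E_j$ onto $\Sigma_j$ are compatible; we do not have to re-derive this, since $I_j$ is assumed nonempty by hypothesis on the composition.

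Applying Theorem~\ref{thm:sct} to the pair $(G_j, E_j)$ then yields that $S_j = Sup\,\mathcal{C}(K_j, G_j)$ is either empty or a synchronizing automaton w.r.t. its initial state. As in the proof of Theorem~\ref{thm:sct}, the case analysis splits according to whether a word $w \in \Sigma_{uj}^* \cap I_j$ avoids the bad states of the controllability fixpoint (in which case $w$ survives in $S_j$ and is a synchronizing word there), or traverses a bad state (in which case the whole word is pruned and the supervisor collapses to the empty one, which the statement already excludes implicitly by being formulated over existing supervisors). Repeating this argument for each $j$ yields the claim.

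The only subtle point, and probably the main thing to state carefully, is that the argument is strictly local: we do not need the global plant $G = \parallel_i H_i$ or the global specification $E$ to be synchronizing, nor do we need the local supervisors $S_j$ to be jointly synchronizing. The corollary only asserts synchronicity of each $S_j$ individually, which is precisely what a local recovery strategy requires, so no additional nonconflict or global-synchronization hypothesis enters the proof. Thus the proof is essentially a one-line reduction, and the work lies entirely in verifying that the hypotheses of Theorem~\ref{thm:sct} are exactly the per-index hypotheses stated in the corollary.
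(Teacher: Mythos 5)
Your proposal is correct and follows essentially the same route as the paper, whose entire proof is the one-line observation that Corollary~\ref{cor:loc} is a direct application of Theorem~\ref{thm:sct} to each local plant--specification pair $(G_j, E_j)$. Your additional remarks (that the hypotheses transfer verbatim per index and that no global synchronization or nonconflict assumption is needed) are accurate elaborations of that same reduction rather than a different argument.
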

\begin{proof}
This results follows from the direct application of Theorem \ref{thm:sct} to local specifications and local plants. 
\end{proof}

Corollary~\ref{cor:loc_block} shows that synchronizing local supervisors are nonconflicting.

\begin{corollary} \label{cor:loc_block}
Let $S_j$ be the local supervisors of a system, defined as synchronizing automata w.r.t. the initial state, then these modular supervisors are nonconflicting, so $\overline{\LM(||^m_{j=1} S_j)} = ||_{j=1}^{m} \overline{\LM(S_j)}$.
\end{corollary}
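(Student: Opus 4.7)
The plan is to reduce the claim to the global synchronization of $S := ||_{j=1}^m S_j$ and then invoke Corollary~\ref{cor:block}. The inclusion $\overline{\LM(||_{j=1}^m S_j)} \subseteq ||_{j=1}^m \overline{\LM(S_j)}$ always holds by the definition of parallel composition (any prefix of a jointly marked word projects, under $P_{\Sigma\to\Sigma_j}$, to a prefix of a marked word of $S_j$), so the work is in the opposite inclusion. For that, it suffices to prove that $S$ is nonblocking, because then $\overline{\LM(S)} = \LG(S)$ and the right-hand side $||_{j=1}^m \overline{\LM(S_j)}$ is trivially contained in $\LG(S)$.

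First I would apply Proposition~\ref{thm:comp} iteratively on $S_1, \ldots, S_m$ to conclude that $S$ is itself synchronizing w.r.t. its initial state. At each composition step, I need the compatibility condition $P^{-1}_{\Sigma\to\Sigma_j}(I'_j) \cap P^{-1}_{\Sigma\to\Sigma_k}(I'_k) \neq \emptyset$, where $I'_j = Syn_{q_{0j}}(S_j)$. By Corollary~\ref{cor:loc}, each $I'_j$ contains a word over uncontrollable events, and under the modeling convention of Section~\ref{sec:Rec}, the recovery events that generate these synchronizing words are shared across the local components, so a word consisting solely of recovery events projects to a synchronizing word of every $S_j$ and lies in the required intersection.

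Second, since $S$ is synchronizing w.r.t. the initial state and accessible (parallel composition of accessible automata starting from $(q_{01},\ldots,q_{0m})$ is accessible), Corollary~\ref{cor:block} yields that $S$ is coaccessible, i.e., every reachable state can reach a marked state (using that the initial state $(q_{01},\ldots,q_{0m})$ is marked in $S$ by the modeling convention, and can be followed by a marked trace of each $S_j$). Thus $\overline{\LM(S)} = \LG(S)$, i.e., $\overline{\LM(||_{j=1}^m S_j)} \supseteq ||_{j=1}^m \overline{\LM(S_j)}$, which closes the equality.

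The main obstacle is the first step: guaranteeing that the synchronizing-word sets $I'_j$ of different supervisors admit a common lift in $\Sigma^*$. The hypothesis of Corollary~\ref{cor:loc_block} as stated does not make this compatibility explicit, so the argument leans on the fact that, under the recovery-event modeling of Section~\ref{sec:Rec}, synchronization is achieved precisely by events that are uncontrollable and appear in every supervisor's alphabet. Once that compatibility is granted, the rest of the proof is a direct appeal to Proposition~\ref{thm:comp} and Corollary~\ref{cor:block}.
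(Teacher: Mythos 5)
Your proof is correct and follows essentially the same route as the paper's: both rest on Proposition~\ref{thm:comp} to show that $S=||_{j=1}^m S_j$ is synchronizing w.r.t.\ the initial state and on Corollary~\ref{cor:block} to obtain coaccessibility, hence $\overline{\LM(S)}=\LG(S)=||_{j=1}^m\LG(S_j)=||_{j=1}^m\overline{\LM(S_j)}$. You are also right to flag that the compatibility hypothesis $P^{-1}_{\Sigma\to\Sigma_j}(I'_j)\cap P^{-1}_{\Sigma\to\Sigma_k}(I'_k)\neq\emptyset$ is not explicit in the corollary's statement --- the paper's own proof invokes Proposition~\ref{thm:comp} without checking it, so your patch via the shared recovery events of Section~\ref{sec:Rec} makes explicit an assumption the paper leaves implicit.
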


\begin{proof}
If $S_j$ is synchronizing w.r.t. the initial state, from  Corollary~\ref{cor:block}, we know $S_j$ is coaccessible and then:

\begin{equation}\label{eq:closure}
    \overline{\LM(S_j)} = \LG(S_j).
\end{equation}

From Theorem~\ref{thm:comp}, we know that $S = ||^m_{j=1} S_j$ is a synchronizing automaton w.r.t. the initial state and is, also, coaccessible, such that:

\begin{equation}
    \overline{\LM(S)} = \LG(S)\nonumber
\end{equation}

replacing $S$ by $||^m_{j=1} S_j$ on both sides:

\begin{equation}
    \overline{\LM(||^m_{j=1} S_j)} = ||^m_{j=1}\LG(S_j).\nonumber
\end{equation}
Using (\ref{eq:closure}) we have

\begin{equation}
    \overline{\LM(||^m_{j=1} S_j)} = ||_{j=1}^{m} \overline{\LM(S_j)}.\nonumber
\end{equation}
So, the supervisors $S_j$ are nonconflicting.
\end{proof}

In the next section synchronization concepts presented so far are used to implement a recovery procedure for a classical SCT problem.

\subsection{Synchronization using Recovery Events}\label{sec:Rec}

In order to integrate the idea of synchronization with the Supervisory Control Theory, we propose the creation of a recovery event that connects each state of the plant to the initial state, including a self-loop in the initial state. Also, if the specification is of a buffer type, we create a recovery event to have the buffer move from any state to the initial state. The same idea can be applied to any other type of specification. 

It is important to note that the creation of the recovery events in a system is only possible when the components of the system admit a restart procedure regardless their current state. This restart can be automatic, when the system has a built-in reset, or manual, when an operator has to manually restart the system. Although the existence of a restart procedure is common in many industrial devices, some systems may not %cannot
be restarted due to physical restrictions, for instance, systems with slow dynamics in which abrupt changes are not possible (power systems, thermal systems, and so on).

A procedure that turns plant and specification into synchronizing automata is presented next.

\subsubsection{Modeling}\label{sec:model}

Consider a system composed of machines $M_i' = (Q_i, \Sigma_i', \delta_i', q_{0i}, \_)$, $i\in\{1\ldots m\}$, and buffer specifications $B_j' = (Q_j, \Sigma_j', \delta_j', q_{0j}, \_)$, $j \in\{ 1 \ldots n\}$. To turn the automata into synchronizing automata, the procedure is:

\begin{enumerate}
    \item[a)] For each plant $M_i'$ we redefine it to $M_{i} = (Q_i, \Sigma_{i}, \delta_{i}, q_{0i}, \_)$ where $\Sigma_{i} = \Sigma_i' \cup \Sigma_{ri}$, $\Sigma_{ri}=\{ r_i \}$, and $\delta_{i}$ as:
    
    \begin{equation*}
        \delta_{i}(q, \sigma) =  
        \begin{cases} 
           q_{0i} & \text{if } \sigma = r_i \\
           \delta_i'(q, \sigma)  & \text{if } \sigma \neq r_i.
        \end{cases}
    \end{equation*}
  
  \item[b)] For each buffer specification $B_j'$ we redefine it to $B_{j} = (Q_j, \Sigma_{j}, \delta_{j}, q_{0j}, \_)$ where $\Sigma_{j} = \Sigma_j' \cup \{ r_{B_j} \}$ and $\delta_{j}$ as:
  
    \begin{equation*}
        \delta_{j}(q, \sigma) =  
        \begin{cases} 
           q_{0j} & \text{if } \sigma = r_{B_j} \\
           \delta_j'(q, \sigma)  & \text{if } \sigma \neq r_{B_j}.
        \end{cases}
    \end{equation*}
\end{enumerate}
\subsubsection{Synthesis}

We propose two modifications to the Supervisory Control Theory, related to the verification of controllability and nonblockingness under a new partition of the event set. Instead of partitioning the event set into controllable and uncontrollable events, we use a third set of events that carries the recovery events, as in \cite{Shu2014}. This modification is justified because in the controllability analysis we need the recovery events to behave as uncontrollable events, but we do not desire that the recovery events take part on the blocking analysis, because the system may be blocking and this will be detected only if recovery events are disregarded.

\begin{definition}

Let $G = (\_, \Sigma, \_,\_, \_)$ be a deterministic finite automaton, synchronizing w.r.t. the initial state, and let $\Sigma = \Sigma_c \cup \Sigma_u \cup \Sigma_r$, with $\Sigma_c$ as the controllable events set, $\Sigma_u$ as the uncontrollable events set and $\Sigma_r$ as the recovery events set. Any word of cardinality $n$ formed as an arrangement, without repetition, of the set $\Sigma_r$, with $n = |\Sigma_r|$ is a synchronizing word of $G$. \qed

\end{definition}

As established in Proposition~\ref{thm:comp}, the parallel composition of two synchronizing automata, w.r.t. the initial state, is also synchronizing when the intersection between their sets of synchronizing words, when inverse projected to the complete event set, is nonempty. Such intersection always exists when using recovery events, as defined in Section~\ref{sec:model}.

\begin{corollary}
Let $G_1 = (\_, \Sigma_1, \_, \_,\_)$ and $G_2 = (\_, \Sigma_2,\_,$ $\_,\_)$ be synchronizing w.r.t. the initial state and $\Sigma = \Sigma_1 \cup \Sigma_2$ and $\Sigma_r = \Sigma_{r1} \cup \Sigma_{r2}$, where $\Sigma_r \subset \Sigma$, $\Sigma_{r1} \subset \Sigma_1$ and $\Sigma_{r2} \subset \Sigma_2$. Also, let $I_1$ and $I_2$ be the sets of synchronizing words of $G_1$ and $G_2$, respectively. The resulting automaton $G = G_1 || G_2$ is synchronizing w.r.t. the initial state.
\end{corollary}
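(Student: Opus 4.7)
The plan is to reduce this to a direct application of Proposition~\ref{thm:comp}, which already isolates the only nontrivial requirement: the existence of a common element inside $P^{-1}_{\Sigma \to \Sigma_1}(I_1) \cap P^{-1}_{\Sigma \to \Sigma_2}(I_2)$. The entire task therefore becomes exhibiting a single word $w \in \Sigma^*$ whose projection onto $\Sigma_1$ lies in $I_1$ and whose projection onto $\Sigma_2$ lies in $I_2$; once such a $w$ is constructed, Proposition~\ref{thm:comp} delivers the conclusion for free.

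To build the witness I would invoke the definition immediately preceding the corollary: any arrangement without repetition of the full recovery alphabet $\Sigma_{ri}$ is a synchronizing word of the corresponding component $G_i$. So I pick an enumeration $w_1$ of $\Sigma_{r1}$ and an enumeration $w_2$ of $\Sigma_{r2}$, giving $w_1 \in I_1$ and $w_2 \in I_2$, and take $w = w_1 w_2 \in \Sigma^*$. By the modeling procedure of Section~\ref{sec:model}, each recovery event is introduced as a fresh label private to its component, so $\Sigma_{r1} \cap \Sigma_2 = \emptyset$ and $\Sigma_{r2} \cap \Sigma_1 = \emptyset$. The projection $P_{\Sigma \to \Sigma_1}(w)$ then erases every letter of $w_2$ and returns exactly $w_1 \in I_1$, while symmetrically $P_{\Sigma \to \Sigma_2}(w) = w_2 \in I_2$. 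This places $w$ in the required intersection and closes the argument.

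The main obstacle is that the corollary as stated only enforces $\Sigma_{ri} \subset \Sigma_i$ and does not explicitly demand that $\Sigma_{r1}$ be disjoint from $\Sigma_2$ (or $\Sigma_{r2}$ from $\Sigma_1$). This disjointness is implicit in the construction of Section~\ref{sec:model}, where each recovery label is attached to exactly one machine or buffer, and I would make the hypothesis explicit in the writeup. Without it the clean concatenation argument above breaks, because a letter of $w_2$ that also lives in $\Sigma_1$ would leak into $P_{\Sigma \to \Sigma_1}(w)$ and could disturb $w_1$; one would then have to fall back on an interleaving-and-padding argument or verify directly that each factor still reaches its initial state after $w$. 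With the private-label convention in force, however, the proof reduces to the one-line appeal to Proposition~\ref{thm:comp} outlined above.
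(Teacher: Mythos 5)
Your overall route---reduce everything to Proposition~\ref{thm:comp} and exhibit one word of recovery events lying in $P^{-1}_{\Sigma \to \Sigma_1}(I_1) \cap P^{-1}_{\Sigma \to \Sigma_2}(I_2)$---is exactly the paper's, but your witness $w = w_1 w_2$ rests on a disjointness hypothesis that fails in precisely the situations this corollary is meant to cover. You assume $\Sigma_{r1} \cap \Sigma_2 = \emptyset$ and $\Sigma_{r2} \cap \Sigma_1 = \emptyset$ on the grounds that each recovery label is private to its machine or buffer. That is true of the \emph{components}, but $G_1$ and $G_2$ here are typically composites that share components: in the extended small factory, $G_1 = M_1 \| M_2$ and $G_2 = M_2 \| M_3$, so $r_2 \in \Sigma_{r1} \cap \Sigma_{r2}$; in the FMS, every local plant contains the robot, so $r_r$ is common to all of them. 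The statement's own notation $\Sigma_r = \Sigma_{r1} \cup \Sigma_{r2}$ (rather than a disjoint union) is the tell. In that overlapping case $P_{\Sigma\to\Sigma_1}(w_1w_2) = w_1 u$ with $u \neq \epsilon$, and your argument, as you yourself note, breaks; since the corollary is what feeds the local-modular results (Corollary~\ref{cor:loc_block} and the case studies), declaring that case out of scope is not an option.

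The paper avoids the problem by choosing a different witness: a single word in which each event of $\Sigma_r = \Sigma_{r1}\cup\Sigma_{r2}$ occurs exactly once (the $perm(\Sigma_r,\Sigma_r)$ construction). Its projection onto $\Sigma_1$ is then a permutation of $\Sigma_{r1}$ and its projection onto $\Sigma_2$ a permutation of $\Sigma_{r2}$ \emph{simultaneously}, so shared recovery events cause no trouble. (This still uses the implicit, and weaker, modeling assumption $\Sigma_r \cap \Sigma_i = \Sigma_{ri}$, i.e.\ a recovery event of one automaton is never an ordinary event of the other.) Your concatenation argument is salvageable even with overlap if you add one observation you currently lack: by the construction of Section~\ref{sec:model}, every recovery event self-loops at the initial state, so $I_1\Sigma_{r1}^* \subseteq I_1$ and the trailing garbage $u \in \Sigma_{r1}^*$ is harmless. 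Without either that observation or the switch to a permutation of the union, the proof has a genuine gap for the shared-component case.
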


\begin{proof}
To show that automaton $G = G_1 || G_2$ is synchronizing w.r.t. the initial state it is enough to show that $P^{-1}_{\Sigma \to \Sigma_1}(I_1) \cap P^{-1}_{\Sigma \to \Sigma_2}(I_2)\neq\emptyset$, (Proposition \ref{thm:comp}), $P_{\Sigma \to \Sigma_i}:\Sigma^*\to\Sigma_i^*$, $i\in\{1,2\}$.

Let $perm(\Sigma_a,\Sigma_b) = \{ s : s \in \Sigma_a^* \land \forall \sigma \in \Sigma_b, |P_{\Sigma_a \to \{\sigma\}}(s)| = 1\}$ be a subset of a $\Sigma_a^*$ where every event in $\Sigma_b$ occurs only once. When $\Sigma_a = \Sigma_b$ the resulting language carries the words that are permutations of the events in $\Sigma_a$.

By definition, 
$$perm(\Sigma_{r1}, \Sigma_{r1}) \subseteq I_1$$
$$perm(\Sigma_{r2}, \Sigma_{r2}) \subseteq I_2$$
and 
\begin{multline}perm(\Sigma_r, \Sigma_r) \subseteq perm(\Sigma_r, \Sigma_{r1}) \\ \subseteq P^{-1}_{\Sigma \to \Sigma_1}(perm(\Sigma_{r1},\Sigma_{r1}))
\label{eq:perm1}
\end{multline}
\begin{multline}
perm(\Sigma_r, \Sigma_r) \subseteq perm(\Sigma_r, \Sigma_{r2}) \\ \subseteq P^{-1}_{\Sigma \to \Sigma_2}(perm(\Sigma_{r2},\Sigma_{r2})).
\label{eq:perm2}\end{multline}

If $\Sigma_r\neq\emptyset$, we have that $perm(\Sigma_r, \Sigma_r)\neq\emptyset$. From (\ref{eq:perm1}) and (\ref{eq:perm2}), $perm(\Sigma_r, \Sigma_r) \subseteq P^{-1}_{\Sigma \to \Sigma_1}(I_1) \cap P^{-1}_{\Sigma \to \Sigma_2}(I_2)$. Then,  we can say that $G=G_1||G_2$, modeled with recovery events, is synchronizing w.r.t. the initial state.

\end{proof}

Now, it is necessary to redefine nonblockingness and controllability, since there is a new partition to the events set (including $\Sigma_r$). When verifying nonblocking, recovery events are ignored  because a blocking behavior should not be turned into nonblocking by recovery events. When verifying controllability, the recovery events should be considered as uncontrollable events, because a recovery event should never be disabled by the supervisor.

The definition of nonblocking for systems with recovery events is given in Definition \ref{def:nonbl}.
\begin{definition}\label{def:nonbl}
Let $G = (\_, \Sigma, \_, \_, \_)$ be a deterministic finite automata, synchronizing w.r.t. the initial state, with $\Sigma_r\subseteq\Sigma$. $G$ is nonblocking if:
$$\LG(G)\cap (\Sigma\setminus \Sigma_r)^* = \overline{\LM(G) \cap (\Sigma\setminus \Sigma_r)^*} $$ 
\qed
\end{definition}

The modified definition of controllability is presented in Definition \ref{def:control}.

\begin{definition}\label{def:control}

Let $G = (\_, \Sigma, \_, \_, \_)$ be a deterministic finite automata, synchronizing w.r.t. the initial state, so its event set can be partitioned into $\Sigma = \Sigma_c \cup \Sigma_u \cup \Sigma_r$. A language $K \subseteq \LM(G)$ is controllable if:
$$ \overline{K} (\Sigma_u \cup \Sigma_r) \cap \LG(G) \subseteq \overline{K}. $$
\qed
\end{definition}

If $K$ satisfies the condition, then it is controllable, otherwise the \textit{supremal controllable and  nonblocking sublanguage} can be synthesized. Theorem \ref{thm:sct} is reformulated in Corollary \ref{cor:sct}.

\begin{corollary}\label{cor:sct}
Let $G=||_{i=1}^m M_i$ and $E=||_{j=1}^n B_j$, modeled as synchronizing automata w.r.t. the initial state, as in Section \ref{sec:model}. Let $I$ be the set of synchronizing words of $K=G||E$, if  $(\Sigma_{u}\cup\Sigma_r)^* \cap I \neq\emptyset$ then a nonempty controllable and nonblocking supervisor $S$, synthesized from $G$ and $E$ is, also, a synchronizing automata w.r.t. the initial state.
\end{corollary}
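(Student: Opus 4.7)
The plan is to reduce this statement to an adaptation of the argument used in the proof of Theorem~\ref{thm:sct}, with $\Sigma_u$ replaced by $\Sigma_u \cup \Sigma_r$ throughout and with the modified notions of controllability (Definition~\ref{def:control}) and nonblockingness (Definition~\ref{def:nonbl}) playing the roles of the classical ones. The preceding corollary on parallel composition and Proposition~\ref{thm:comp} do most of the structural work; what remains is a careful bookkeeping exercise plus one nontrivial observation about the marked language.

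First I would invoke the immediately preceding corollary to conclude that $K = G \parallel E$ is itself synchronizing w.r.t.\ its initial state, so $I = Syn_{q_0}(K)$ is nonempty, and the hypothesis $(\Sigma_{u}\cup\Sigma_r)^* \cap I \neq \emptyset$ then supplies a synchronizing word $w$ composed solely of events that a supervisor cannot disable under Definition~\ref{def:control}. Let $S = (Q_s, \Sigma, \_, q_0, Q_{ms})$ denote the result of the standard bad-state removal applied to the automaton realizing $K$, with $Q_s \subseteq Q$ the set of good states.

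Next I would carry out the same case split as in Theorem~\ref{thm:sct}. Pick any $q \in Q_s$ and trace $w$ from $q$: if this trace ever leaves $Q_s$ by entering a bad state $q_f$, then because every symbol of $w$ lies in $\Sigma_u \cup \Sigma_r$, the modified controllability relation $\overline{K}(\Sigma_u \cup \Sigma_r) \cap \LG(G) \subseteq \overline{K}$ propagates badness backward along the prefix of $w$ until $q$ itself is marked bad, contradicting $q \in Q_s$. Hence from every $q \in Q_s$ the trace of $w$ remains inside $Q_s$ and terminates at $q_0$ (which survives because $S$ is assumed nonempty), showing $Q_s.w = \{q_0\}$ and thereby that $S$ is synchronizing w.r.t.\ the initial state.

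Finally, $S$ is accessible by construction and synchronizing w.r.t.\ the initial state, so Corollary~\ref{cor:block}(b) immediately yields classical coaccessibility. The delicate step I expect to be the main obstacle is upgrading this classical coaccessibility to the modified nonblocking property of Definition~\ref{def:nonbl}, which demands that the sublanguage $\LG(S) \cap (\Sigma \setminus \Sigma_r)^*$ be nonblocking on its own terms: the synchronizing witness produced by Corollary~\ref{cor:block} uses recovery events, whereas Definition~\ref{def:nonbl} forbids them. To close this gap I would exploit the specific modular construction from Section~\ref{sec:model}, exhibiting for each reachable state of $S$ a recovery-event-free continuation to a marked state by combining the nonblocking structure of the original components $M_i'$ and $B_j'$ (which are coaccessible in their native alphabets) with the usual SCT fact that the supremal controllable sublanguage preserves such continuations along events in $\Sigma \setminus \Sigma_r$.
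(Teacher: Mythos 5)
Your proposal is correct and follows essentially the same route as the paper: the paper's proof simply observes that the modeling of Section~\ref{sec:model} guarantees $I\cap\Sigma_r^*\neq\emptyset$ and then declares that Theorem~\ref{thm:sct} goes through verbatim with $\Sigma_u$ replaced by $\Sigma_u\cup\Sigma_r$, which is exactly the substitution you carry out explicitly via the bad-state backward-propagation argument. Your closing worry about upgrading coaccessibility to the recovery-event-free nonblocking of Definition~\ref{def:nonbl} is a legitimate subtlety that the paper's one-line reduction does not address at all, so it reflects extra care on your part rather than a defect relative to the paper's own argument.
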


\begin{proof}
Using the modeling approach proposed in Section \ref{sec:model}, recovery events are included in the subsystems and specifications such that: 
\begin{equation}
    I\cap\Sigma_r^*\neq\emptyset.\label{eq:sigmar}
\end{equation} 

From Definition~\ref{def:control} we know that the recovery events cannot be disabled by the supervisor (if that happened the controllability test would fail). Then, Theorem \ref{thm:sct} is valid replacing $\Sigma_u$ with $(\Sigma_u\cup\Sigma_r)$ in the statement and in the proof. 

In such a case, the condition for the validity of the Theorem is changed to $(\Sigma_{u}\cup\Sigma_r)^* \cap I \neq\emptyset$. From (\ref{eq:sigmar}) we know that $(\Sigma_{u}\cup\Sigma_r)^* \cap I \neq\emptyset$, so the condition is fulfilled and the supervisor is synchronizing w.r.t. the initial state.
\end{proof}

With this approach, a controllable and nonblocking supervisor is always synchronizing w.r.t. the initial state. 

The same approach can be applied to Local Modular Supervisory Control. As presented in Corollary~\ref{cor:loc}, each  local supervisor is controllable and nonblocking. However, even if the original system is nonconflicting, the nonconflict test over the supervisors with recovery events is necessary. Since recovery events do not take part into the nonblockingness verification, the nonconflicting test of (\ref{eq:nonconflictOriginal}) has to be adapted to ignore recovery events (Definition \ref{def:nonconflictReset}).

\begin{definition} \label{def:nonconflictReset}
Let $S_j$ be the local supervisors of a system, defined as synchronizing automata w.r.t. the initial state with event set $\Sigma_j = \Sigma_{cj} \cup \Sigma_{uj} \cup\Sigma_{rj}$. These supervisors are nonconflicting if:

\begin{equation}\label{eq:nonconflictReset}
||^m_{j=1} \overline{\LM(S_j) \cap (\Sigma_j \setminus \Sigma_{rj})^*}  =
\overline{\LM(||^m_{j=1} S_j) \cap (\Sigma\setminus \Sigma_r)^*}    \nonumber
\end{equation}
\qed
\end{definition}

In manufacturing systems, the recovery events typically share a transition with uncontrollable events in the plants, or are in self-loops. The resulting modular supervisors, when we remove the recovery events, are equal to the modular supervisors of the system when modeled without recovery events. When this is the case, a nonconflicting control system will be nonconflicting after the recovery events are added.

Next section shows a complete example of the application of synchronizing automata w.r.t. the initial state using recovery events to recover from a fault when the plant and the supervisor become unsynchronized.

\section{Case Studies}\label{sec:example}
%We propose to use synchronizing automata in a secure recovery procedure. 
In this section, we show how to model regular DES problems (the extended small factory \cite{W2014} and the Flexible Manufacturing System \cite{Queiroz2005}) as synchronizing automata and apply the reset procedure proposed in this paper.

\subsection{Extended Small Factory} 

%\PP{Acho que tira os exemplos ja que  secao é de case studies}

%\begin{example}
Consider an extended version of the small factory, composed of three machines and two unity buffers, Fig.~\ref{fig:sf}. 

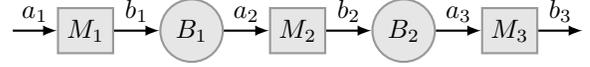
\begin{figure}[htbp]
\centering

\tikzstyle{buffer}=[circle,
                                    thick,
                                    minimum size=0.2cm,
                                    draw=gray!80,
                                    fill=gray!20]
                                    
\tikzstyle{machine}=[rectangle,
                                    thick,
                                    minimum size=0.6cm,
                                    draw=gray!80,
                                    fill=gray!20]
                                    
\tikzstyle{IO}=[rectangle,
                                    thick,
                                    minimum size=0.1cm,
                                   draw=none,fill=none]

\begin{tikzpicture}[>=latex,text height=1.5ex,text depth=0.25ex]

  \matrix[row sep=0.5cm,column sep=0.3cm] {
    & \node (I) [IO,draw=none]{$ $}; &
    & \node (m1) [machine]{$M_1$}; &
    & \node (b1)   [buffer]{$B_1$};   &
    & \node (m2) [machine]{$M_2$}; &
    & \node (b2)   [buffer]{$B_2$};   &
    & \node (m3) [machine]{$M_3$}; &
    & \node[draw=none] (O) [IO]{$ $}; &
        \\
    };
    
    \path[->]
        (I)  edge[thick,left] node[above] {$a_1$} (m1)
        (m1) edge[thick,left] node[above] {$b_1$} (b1)
        (b1) edge[thick,left] node[above] {$a_2$} (m2)
        (m2) edge[thick,left] node[above] {$b_2$} (b2)
        (b2) edge[thick,left] node[above] {$a_3$} (m3)
        (m3) edge[thick,left] node[above] {$b_3$} (O)
        ;
\end{tikzpicture}
            
\caption{Extended small factory diagram.}
\label{fig:sf}          
\end{figure}
\noindent{Originally, each machine is modeled by an automaton $M_i'=(\_,\Sigma_i,\_,\_,\_)$, $i\in\{1,2,3\}$, with 2 states (idle and working) and 2 transitions (start and finish). The unity buffers are also modeled with automata $B_j'$, $j\in\{1,2\}$, with two states and two transitions (Figure \ref{fig:autoOrig}).}

\begin{figure}[htbp]
    \centering
\subfloat[$M_i'$, $i\in\{1,2,3\}$]{          
\begin{tikzpicture}[scale=0.10] \footnotesize
            \tikzstyle{every node}+=[inner sep=0pt]
            \draw [black] (25.7,-18.2) circle (3);
            \draw (25.7,-18.2) node {$0$};
            \draw [black] (25.7,-18.2) circle (2.4);
            \draw [black] (43.9,-18.2) circle (3);
            \draw (43.9,-18.2) node {$1$};
            \draw [black] (27.88,-16.151) arc (125.92933:54.07067:11.792);
            \fill [black] (41.72,-16.15) -- (41.37,-15.28) -- (40.78,-16.09);
            \draw (34.8,-13.41) node [above] {$a_i$};
            \draw [black] (41.412,-19.866) arc (-62.24811:-117.75189:14.199);
            \fill [black] (28.19,-19.87) -- (28.66,-20.68) -- (29.13,-19.8);
            \draw (34.8,-22) node [below] {$b_i$};
            \draw [black] (18,-18.2) -- (22.7,-18.2);
            \fill [black] (22.7,-18.2) -- (21.9,-17.7) -- (21.9,-18.7);
        \end{tikzpicture}\hspace{1cm}
}
\subfloat[$B_j'$, $j\in\{1,2\}$]{          
\begin{tikzpicture}[scale=0.10] \footnotesize
            \tikzstyle{every node}+=[inner sep=0pt]
            \draw [black] (25.7,-18.2) circle (3);
            \draw (25.7,-18.2) node {$E$};
            \draw [black] (25.7,-18.2) circle (2.4);
            \draw [black] (43.9,-18.2) circle (3);
            \draw (43.9,-18.2) node {$F$};
            \draw [black] (27.88,-16.151) arc (125.92933:54.07067:11.792);
            \fill [black] (41.72,-16.15) -- (41.37,-15.28) -- (40.78,-16.09);
            \draw (34.8,-13.41) node [above] {$b_i$};
            \draw [black] (41.412,-19.866) arc (-62.24811:-117.75189:14.199);
            \fill [black] (28.19,-19.87) -- (28.66,-20.68) -- (29.13,-19.8);
            \draw (34.8,-22) node [below] {$a_{i+1}$};
            \draw [black] (18,-18.2) -- (22.7,-18.2);
            \fill [black] (22.7,-18.2) -- (21.9,-17.7) -- (21.9,-18.7);
        \end{tikzpicture}
}\caption{Original automata modeling the extended small factory.}\label{fig:autoOrig}
\end{figure}
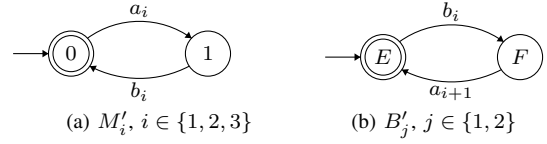

\noindent{First, the automata of Fig.~\ref{fig:autoOrig} are transformed into synchronizing automata w.r.t. the initial state,} by adding recovery events $r_i$ and $r_{B_j}$ to the models, such that each automaton is brought to the initial state when their synchronizing word is executed, as described in Section \ref{sec:model}.\\

\noindent{In Fig.~\ref{fig:model} the model of each part of the system and the shortest synchronizing word of each machine are shown. The shortest synchronizing word is the trace $r_i\in\Sigma^*_{ri}$ for each plant and $r_{B_j}\in\Sigma^*_{B_j}$ for each specification.}

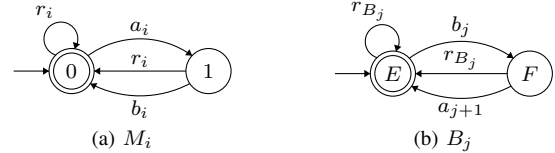
\begin{figure}[htbp]
    \centering
    \subfloat[$M_i$]{
        \begin{tikzpicture}[scale=0.10] \footnotesize
            \tikzstyle{every node}+=[inner sep=0pt]
            \draw [black] (25.7,-18.2) circle (3);
            \draw (25.7,-18.2) node {$0$};
            \draw [black] (25.7,-18.2) circle (2.4);
            \draw [black] (43.9,-18.2) circle (3);
            \draw (43.9,-18.2) node {$1$};
            \draw [black] (27.88,-16.151) arc (125.92933:54.07067:11.792);
            \fill [black] (41.72,-16.15) -- (41.37,-15.28) -- (40.78,-16.09);
            \draw (34.8,-13.41) node [above] {$a_i$};
            \draw [black] (41.412,-19.866) arc (-62.24811:-117.75189:14.199);
            \fill [black] (28.19,-19.87) -- (28.66,-20.68) -- (29.13,-19.8);
            \draw (34.8,-22) node [below] {$b_i$};
            \draw [black] (18,-18.2) -- (22.7,-18.2);
            \fill [black] (22.7,-18.2) -- (21.9,-17.7) -- (21.9,-18.7);
            \draw [black] (40.9,-18.2) -- (28.7,-18.2);
            \fill [black] (28.7,-18.2) -- (29.5,-18.7) -- (29.5,-17.7);
            \draw (34.8,-17.7) node [above] {$r_i$};
            \draw [black] (23.598,-16.076) arc (252.43495:-35.56505:2.25);
            \draw (22.13,-11.24) node [above] {$r_i$};
            \fill [black] (26.11,-15.24) -- (26.83,-14.63) -- (25.87,-14.33);
        \end{tikzpicture}
        }\hspace{1cm}
\subfloat[$B_j$]{            \begin{tikzpicture}[scale=0.1] \footnotesize
            \tikzstyle{every node}+=[inner sep=0pt]
            \draw [black] (25.7,-18.2) circle (3);
            \draw (25.7,-18.2) node {$E$};
            \draw [black] (25.7,-18.2) circle (2.4);
            \draw [black] (43.9,-18.2) circle (3);
            \draw (43.9,-18.2) node {$F$};
            \draw [black] (27.88,-16.151) arc (125.92933:54.07067:11.792);
            \fill [black] (41.72,-16.15) -- (41.37,-15.28) -- (40.78,-16.09);
            \draw (34.8,-13.41) node [above] {$b_j$};
            \draw [black] (41.412,-19.866) arc (-62.24811:-117.75189:14.199);
            \fill [black] (28.19,-19.87) -- (28.66,-20.68) -- (29.13,-19.8);
            \draw (34.8,-22) node [below] {$a_{j+1}$};
            \draw [black] (18,-18.2) -- (22.7,-18.2);
            \fill [black] (22.7,-18.2) -- (21.9,-17.7) -- (21.9,-18.7);
            \draw [black] (40.9,-18.2) -- (28.7,-18.2);
            \fill [black] (28.7,-18.2) -- (29.5,-18.7) -- (29.5,-17.7);
            \draw (34.8,-17.7) node [above] {$r_{B_j}$};
            \draw [black] (23.598,-16.076) arc (252.43495:-35.56505:2.25);
            \draw (22.56,-11.26) node [above] {$r_{B_j}$};
            \fill [black] (26.11,-15.24) -- (26.83,-14.63) -- (25.87,-14.33);
        \end{tikzpicture}
        }
    \caption{Synchronizing automaton $M_i$, $i \in \{1, 2, 3\}$, with $w_i = r_i\in I_1$, and $B_j$, $j\in\{1,2\}$, $w_{B_j} = r_{B_j}$.}
    \label{fig:model}
\end{figure}

\noindent{Each local plant $G_j=M_j||M_{j+1}$, $j=\{1,2\}$, is also a synchronizing automaton w.r.t. the initial state, $G_j=(\_,\Sigma_j\cup\Sigma_{j+1}\cup\{r_j,r_{j+1}\},\_,\_,\_)$, with shortest synchronization word $w_j\in\{{r_j}r_{j+1},\,r_{j+1}r_j\}$. For $G_1$ (Fig.~\ref{fig:ex}(a)), $w_1\in\{{r_1}r_{2},\,r_{2}r_1\}$ and for $G_2$ (Fig.~\ref{fig:exb2}(a)), $w_2\in\{{r_2}r_{3},\,r_{3}r_2\}$.}

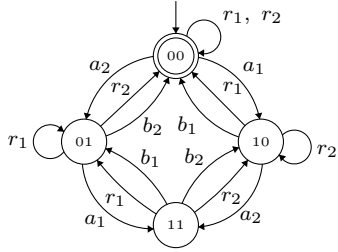
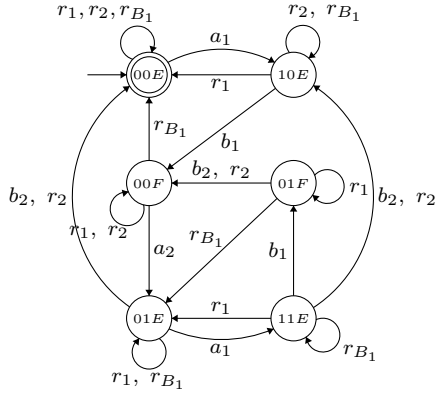
\begin{figure}[htbp]
    
    \centering
    \subfloat[$G_1=M_1||M_{2}$, with a (not unique) shortest synchronizing word $w_1= r_1  r_{2}$.]{
    \makebox[8cm][c]{
        \begin{tikzpicture}[scale=0.10] \footnotesize
          \tikzstyle{every node}+=[inner sep=0pt]
\draw [black] (33.4,-13.5) circle (3);
\draw (33.4,-13.5) node {\tiny$00$};
\draw [black] (33.4,-13.5) circle (2.4);
\draw [black] (33.4,-36) circle (3);
\draw (33.4,-36) node {\tiny$11$};
\draw [black] (44.7,-24.9) circle (3);
\draw (44.7,-24.9) node {\tiny$10$};
\draw [black] (21.2,-24.9) circle (3);
\draw (21.2,-24.9) node {\tiny$01$};
\draw [black] (33.4,-6.2) -- (33.4,-10.5);
\fill [black] (33.4,-10.5) -- (33.9,-9.7) -- (32.9,-9.7);
\draw [black] (42.59,-22.77) -- (35.51,-15.63);
\fill [black] (35.51,-15.63) -- (35.72,-16.55) -- (36.43,-15.85);
\draw (39.57,-17.72) node [right] {$r_1$};
\draw [black] (42.964,-27.346) arc (-37.71812:-53.30499:36.634);
\fill [black] (42.96,-27.35) -- (42.08,-27.67) -- (42.87,-28.28);
\draw (40.81,-31.55) node [below] {$r_{2}$};
\draw [black] (30.793,-34.517) arc (-122.40579:-142.18832:30.977);
\fill [black] (22.92,-27.36) -- (23.02,-28.29) -- (23.81,-27.68);
\draw (25.4,-31.76) node [below] {$r_1$};
\draw [black] (23.39,-22.85) -- (31.21,-15.55);
\fill [black] (31.21,-15.55) -- (30.28,-15.73) -- (30.96,-16.46);
\draw (26.15,-18.72) node [above] {$r_{2}$};
\draw [black] (18.52,-26.223) arc (-36:-324:2.25);
\draw (13.95,-24.9) node [left] {$r_1$};
\fill [black] (18.52,-23.58) -- (18.17,-22.7) -- (17.58,-23.51);
\draw [black] (47.556,-24.02) arc (134.85852:-153.14148:2.25);
\draw (52.07,-26.19) node [right] {$r_{2}$};
\fill [black] (47.14,-26.63) -- (47.35,-27.55) -- (48.05,-26.85);
\draw [black] (34.869,-10.898) arc (178.28688:-109.71312:2.25);
\draw (39.58,-8.35) node [right] {$r_1,\mbox{ }r_{2}$};
\fill [black] (36.36,-13.08) -- (37.14,-13.61) -- (37.17,-12.61);
\draw [black] (36.384,-13.678) arc (78.37519:11.12:10.47);
\fill [black] (44.55,-21.91) -- (44.88,-21.03) -- (43.9,-21.23);
\draw (42.24,-15.09) node [right] {$a_1$};
\draw [black] (41.772,-24.279) arc (-108.88739:-161.61741:12.438);
\fill [black] (34,-16.43) -- (33.77,-17.35) -- (34.72,-17.03);
\draw (36.44,-22.74) node [left] {$b_1$};
\draw [black] (44.902,-27.88) arc (-5.38471:-85.63841:9.272);
\fill [black] (36.38,-36.26) -- (37.21,-36.69) -- (37.14,-35.7);
\draw (43.43,-34.1) node [below] {$a_{2}$};
\draw [black] (34.224,-33.122) arc (157.66125:111.31564:13.507);
\fill [black] (41.81,-25.67) -- (40.88,-25.5) -- (41.24,-26.43);
\draw (35.93,-28.14) node [above] {$b_{2}$};
\draw [black] (24.054,-25.813) arc (67.0943:28.31159:16.632);
\fill [black] (24.05,-25.81) -- (24.6,-26.58) -- (24.99,-25.66);
\draw (30.09,-28.34) node [above] {$b_1$};
\draw [black] (30.456,-36.507) arc (-89.36058:-175.23354:9.411);
\fill [black] (30.46,-36.51) -- (29.65,-36.02) -- (29.66,-37.02);
\draw (22.76,-34.54) node [below] {$a_1$};
\draw [black] (32.455,-16.342) arc (-24.20196:-69.68102:14.793);
\fill [black] (32.46,-16.34) -- (31.67,-16.87) -- (32.58,-17.28);
\draw (30.38,-21.57) node [below] {$b_{2}$};
\draw [black] (21.464,-21.921) arc (-192.91597:-260.967:10.941);
\fill [black] (21.46,-21.92) -- (22.13,-21.25) -- (21.16,-21.03);
\draw (23.4,-15.89) node [above] {$a_{2}$};
        \end{tikzpicture}    \label{fig:comp}
       }
    }

    \subfloat[Resulting supervisor $S_1$, with a (not unique) shortest synchronizing word $w = r_1 r_{2} r_{B_1}$.]{
    \makebox[8cm][c]{

\begin{tikzpicture}[scale=0.10] \footnotesize
\tikzstyle{every node}+=[inner sep=0pt]
\draw [black] (14.9,-12.4) circle (3);
\draw (14.9,-12.4) node {\tiny $00E$};
\draw [black] (14.9,-12.4) circle (2.4);
\draw [black] (14.9,-26.8) circle (3);
\draw (14.9,-26.8) node {\tiny$00F$};
\draw [black] (34.1,-26.8) circle (3);
\draw (34.1,-26.8) node {\tiny$01F$};
\draw [black] (14.9,-44.8) circle (3);
\draw (14.9,-44.8) node {\tiny$01E$};
\draw [black] (34.1,-44.8) circle (3);
\draw (34.1,-44.8) node {\tiny$11E$};
\draw [black] (34.1,-12.4) circle (3);
\draw (34.1,-12.4) node {\tiny$10E$};
\draw [black] (6.7,-12.4) -- (11.9,-12.4);
\fill [black] (11.9,-12.4) -- (11.1,-11.9) -- (11.1,-12.9);
\draw [black] (12.74,-10.335) arc (254.02753:-33.97247:2.25);
\draw (9.25,-5.46) node [above] {$r_1,r_{2},r_{B_1}$};
\fill [black] (15.22,-9.43) -- (15.93,-8.8) -- (14.96,-8.52);
\draw [black] (17.4,-10.75) arc (117.78867:62.21133:15.23);
\fill [black] (31.6,-10.75) -- (31.13,-9.93) -- (30.66,-10.82);
\draw (24.5,-8.49) node [above] {$a_1$};
\draw [black] (31.1,-12.4) -- (17.9,-12.4);
\fill [black] (17.9,-12.4) -- (18.7,-12.9) -- (18.7,-11.9);
\draw (24.5,-12.9) node [below] {$r_1$};
\draw [black] (33.565,-9.46) arc (218.0546:-69.9454:2.25);
\draw (38.35,-5.25) node [above] {$r_{2},\mbox{ }r_{B_1}$};
\fill [black] (36.11,-10.19) -- (37.05,-10.09) -- (36.43,-9.3);
\draw [black] (36.737,-13.823) arc (56.78847:-56.78847:17.662);
\fill [black] (36.74,-13.82) -- (37.13,-14.68) -- (37.68,-13.84);
\draw (45.23,-28.6) node [right] {$b_{2},\mbox{ }r_{2}$};
\draw [black] (37.083,-44.981) arc (114.25512:-173.74488:2.25);
\draw (40.62,-49.37) node [right] {$r_{B_1}$};
\fill [black] (35.77,-47.28) -- (35.64,-48.21) -- (36.55,-47.8);
\draw [black] (31.7,-14.2) -- (17.3,-25);
\fill [black] (17.3,-25) -- (18.24,-24.92) -- (17.64,-24.12);
\draw (25.8,-20.1) node [below] {$b_1$};
\draw [black] (14.163,-29.696) arc (13.45497:-274.54503:2.25);
\draw (8.24,-32.42) node [below] {$r_1,\mbox{ }r_{2}$};
\fill [black] (12.15,-27.98) -- (11.26,-27.68) -- (11.49,-28.65);
\draw [black] (14.9,-23.8) -- (14.9,-15.4);
\fill [black] (14.9,-15.4) -- (14.4,-16.2) -- (15.4,-16.2);
\draw (15.4,-19.6) node [right] {$r_{B_1}$};
\draw [black] (31.1,-26.8) -- (17.9,-26.8);
\fill [black] (17.9,-26.8) -- (18.7,-27.3) -- (18.7,-26.3);
\draw (24.5,-26.3) node [above] {$b_{2},\mbox{ }r_{2}$};
\draw [black] (36.902,-25.762) arc (138.06524:-149.93476:2.25);
\draw (41.46,-27.59) node [right] {$r_1$};
\fill [black] (36.63,-28.39) -- (36.89,-29.3) -- (37.56,-28.56);
\draw [black] (31.91,-28.85) -- (17.09,-42.75);
\fill [black] (17.09,-42.75) -- (18.01,-42.57) -- (17.33,-41.84);
\draw (22.65,-35.32) node [above] {$r_{B_1}$};
\draw [black] (34.1,-41.8) -- (34.1,-29.8);
\fill [black] (34.1,-29.8) -- (33.6,-30.6) -- (34.6,-30.6);
\draw (33.6,-35.8) node [left] {$b_1$};
\draw [black] (16.223,-47.48) arc (54:-234:2.25);
\draw (14.9,-52.05) node [below] {$r_1,\mbox{ }r_{B_1}$};
\fill [black] (13.58,-47.48) -- (12.7,-47.83) -- (13.51,-48.42);
\draw [black] (31.453,-46.204) arc (-66.90405:-113.09595:17.725);
\fill [black] (31.45,-46.2) -- (30.52,-46.06) -- (30.91,-46.98);
\draw (24.5,-48.12) node [below] {$a_1$};
\draw [black] (31.1,-44.8) -- (17.9,-44.8);
\fill [black] (17.9,-44.8) -- (18.7,-45.3) -- (18.7,-44.3);
\draw (24.5,-44.3) node [above] {$r_1$};
\draw [black] (12.314,-43.285) arc (-125.14577:-234.85423:17.96);
\fill [black] (12.31,-13.91) -- (11.37,-13.97) -- (11.95,-14.78);
\draw (4.19,-28.6) node [left] {$b_{2},\mbox{ }r_{2}$};
\draw [black] (14.9,-29.8) -- (14.9,-41.8);
\fill [black] (14.9,-41.8) -- (15.4,-41) -- (14.4,-41);
\draw (15.4,-35.8) node [right] {$a_{2}$};
\end{tikzpicture} \label{fig:sup}
   }
}
\caption{Plant $G_1$ and supervisor $S_1$, obtained for the Extended Small Factory of Fig.~\ref{fig:sf}. }\label{fig:ex}
\end{figure}

\begin{figure}[htbp]
\centering
 \subfloat[$G_2$]{
        \begin{tikzpicture}[scale=0.1] \footnotesize
          \tikzstyle{every node}+=[inner sep=0pt]
\draw [black] (33.4,-13.5) circle (3);
\draw (33.4,-13.5) node {\tiny$00$};
\draw [black] (33.4,-13.5) circle (2.4);
\draw [black] (33.4,-36) circle (3);
\draw (33.4,-36) node {\tiny$11$};
\draw [black] (44.7,-24.9) circle (3);
\draw (44.7,-24.9) node {\tiny$10$};
\draw [black] (21.2,-24.9) circle (3);
\draw (21.2,-24.9) node {\tiny$01$};
\draw [black] (33.4,-6.2) -- (33.4,-10.5);
\fill [black] (33.4,-10.5) -- (33.9,-9.7) -- (32.9,-9.7);
\draw [black] (42.59,-22.77) -- (35.51,-15.63);
\fill [black] (35.51,-15.63) -- (35.72,-16.55) -- (36.43,-15.85);
\draw (39.57,-17.72) node [right] {$r_2$};
\draw [black] (42.964,-27.346) arc (-37.71812:-53.30499:36.634);
\fill [black] (42.96,-27.35) -- (42.08,-27.67) -- (42.87,-28.28);
\draw (40.81,-31.55) node [below] {$r_{3}$};
\draw [black] (30.793,-34.517) arc (-122.40579:-142.18832:30.977);
\fill [black] (22.92,-27.36) -- (23.02,-28.29) -- (23.81,-27.68);
\draw (25.4,-31.76) node [below] {$r_2$};
\draw [black] (23.39,-22.85) -- (31.21,-15.55);
\fill [black] (31.21,-15.55) -- (30.28,-15.73) -- (30.96,-16.46);
\draw (26.15,-18.72) node [above] {$r_{3}$};
\draw [black] (18.52,-26.223) arc (-36:-324:2.25);
\draw (13.95,-24.9) node [left] {$r_2$};
\fill [black] (18.52,-23.58) -- (18.17,-22.7) -- (17.58,-23.51);
\draw [black] (47.556,-24.02) arc (134.85852:-153.14148:2.25);
\draw (52.07,-26.19) node [right] {$r_{3}$};
\fill [black] (47.14,-26.63) -- (47.35,-27.55) -- (48.05,-26.85);
\draw [black] (34.869,-10.898) arc (178.28688:-109.71312:2.25);
\draw (39.58,-8.35) node [right] {$r_2,\mbox{ }r_{3}$};
\fill [black] (36.36,-13.08) -- (37.14,-13.61) -- (37.17,-12.61);
\draw [black] (36.384,-13.678) arc (78.37519:11.12:10.47);
\fill [black] (44.55,-21.91) -- (44.88,-21.03) -- (43.9,-21.23);
\draw (42.24,-15.09) node [right] {$a_2$};
\draw [black] (41.772,-24.279) arc (-108.88739:-161.61741:12.438);
\fill [black] (34,-16.43) -- (33.77,-17.35) -- (34.72,-17.03);
\draw (36.44,-22.74) node [left] {$b_2$};
\draw [black] (44.902,-27.88) arc (-5.38471:-85.63841:9.272);
\fill [black] (36.38,-36.26) -- (37.21,-36.69) -- (37.14,-35.7);
\draw (43.43,-34.1) node [below] {$a_{3}$};
\draw [black] (34.224,-33.122) arc (157.66125:111.31564:13.507);
\fill [black] (41.81,-25.67) -- (40.88,-25.5) -- (41.24,-26.43);
\draw (35.93,-28.14) node [above] {$b_{3}$};
\draw [black] (24.054,-25.813) arc (67.0943:28.31159:16.632);
\fill [black] (24.05,-25.81) -- (24.6,-26.58) -- (24.99,-25.66);
\draw (30.09,-28.34) node [above] {$b_2$};
\draw [black] (30.456,-36.507) arc (-89.36058:-175.23354:9.411);
\fill [black] (30.46,-36.51) -- (29.65,-36.02) -- (29.66,-37.02);
\draw (22.76,-34.54) node [below] {$a_2$};
\draw [black] (32.455,-16.342) arc (-24.20196:-69.68102:14.793);
\fill [black] (32.46,-16.34) -- (31.67,-16.87) -- (32.58,-17.28);
\draw (30.38,-21.57) node [below] {$b_{3}$};
\draw [black] (21.464,-21.921) arc (-192.91597:-260.967:10.941);
\fill [black] (21.46,-21.92) -- (22.13,-21.25) -- (21.16,-21.03);
\draw (23.4,-15.89) node [above] {$a_{3}$};
        \end{tikzpicture}   %\label{fig:comp}
       }\\
\subfloat[$S_2$]{
\begin{tikzpicture}[scale=0.1] \footnotesize
\tikzstyle{every node}+=[inner sep=0pt]
\draw [black] (14.9,-12.4) circle (3);
\draw (14.9,-12.4) node {\tiny $00E$};
\draw [black] (14.9,-12.4) circle (2.4);
\draw [black] (14.9,-26.8) circle (3);
\draw (14.9,-26.8) node {\tiny$00F$};
\draw [black] (34.1,-26.8) circle (3);
\draw (34.1,-26.8) node {\tiny$01F$};
\draw [black] (14.9,-44.8) circle (3);
\draw (14.9,-44.8) node {\tiny$01E$};
\draw [black] (34.1,-44.8) circle (3);
\draw (34.1,-44.8) node {\tiny$11E$};
\draw [black] (34.1,-12.4) circle (3);
\draw (34.1,-12.4) node {\tiny$10E$};
\draw [black] (6.7,-12.4) -- (11.9,-12.4);
\fill [black] (11.9,-12.4) -- (11.1,-11.9) -- (11.1,-12.9);
\draw [black] (12.74,-10.335) arc (254.02753:-33.97247:2.25);
\draw (9.25,-5.46) node [above] {$r_2,r_{3},r_{B_2}$};
\fill [black] (15.22,-9.43) -- (15.93,-8.8) -- (14.96,-8.52);
\draw [black] (17.4,-10.75) arc (117.78867:62.21133:15.23);
\fill [black] (31.6,-10.75) -- (31.13,-9.93) -- (30.66,-10.82);
\draw (24.5,-8.49) node [above] {$a_2$};
\draw [black] (31.1,-12.4) -- (17.9,-12.4);
\fill [black] (17.9,-12.4) -- (18.7,-12.9) -- (18.7,-11.9);
\draw (24.5,-12.9) node [below] {$r_2$};
\draw [black] (33.565,-9.46) arc (218.0546:-69.9454:2.25);
\draw (38.35,-5.25) node [above] {$r_{3},\mbox{ }r_{B_2}$};
\fill [black] (36.11,-10.19) -- (37.05,-10.09) -- (36.43,-9.3);
\draw [black] (36.737,-13.823) arc (56.78847:-56.78847:17.662);
\fill [black] (36.74,-13.82) -- (37.13,-14.68) -- (37.68,-13.84);
\draw (45.23,-28.6) node [right] {$b_{3},\mbox{ }r_{3}$};
\draw [black] (37.083,-44.981) arc (114.25512:-173.74488:2.25);
\draw (40.62,-49.37) node [right] {$r_{B_2}$};
\fill [black] (35.77,-47.28) -- (35.64,-48.21) -- (36.55,-47.8);
\draw [black] (31.7,-14.2) -- (17.3,-25);
\fill [black] (17.3,-25) -- (18.24,-24.92) -- (17.64,-24.12);
\draw (25.8,-20.1) node [below] {$b_2$};
\draw [black] (14.163,-29.696) arc (13.45497:-274.54503:2.25);
\draw (8.24,-32.42) node [below] {$r_2,\mbox{ }r_{3}$};
\fill [black] (12.15,-27.98) -- (11.26,-27.68) -- (11.49,-28.65);
\draw [black] (14.9,-23.8) -- (14.9,-15.4);
\fill [black] (14.9,-15.4) -- (14.4,-16.2) -- (15.4,-16.2);
\draw (15.4,-19.6) node [right] {$r_{B_2}$};
\draw [black] (31.1,-26.8) -- (17.9,-26.8);
\fill [black] (17.9,-26.8) -- (18.7,-27.3) -- (18.7,-26.3);
\draw (24.5,-26.3) node [above] {$b_{3},\mbox{ }r_{3}$};
\draw [black] (36.902,-25.762) arc (138.06524:-149.93476:2.25);
\draw (41.46,-27.59) node [right] {$r_2$};
\fill [black] (36.63,-28.39) -- (36.89,-29.3) -- (37.56,-28.56);
\draw [black] (31.91,-28.85) -- (17.09,-42.75);
\fill [black] (17.09,-42.75) -- (18.01,-42.57) -- (17.33,-41.84);
\draw (22.65,-35.32) node [above] {$r_{B_2}$};
\draw [black] (34.1,-41.8) -- (34.1,-29.8);
\fill [black] (34.1,-29.8) -- (33.6,-30.6) -- (34.6,-30.6);
\draw (33.6,-35.8) node [left] {$b_2$};
\draw [black] (16.223,-47.48) arc (54:-234:2.25);
\draw (14.9,-52.05) node [below] {$r_2,\mbox{ }r_{B_2}$};
\fill [black] (13.58,-47.48) -- (12.7,-47.83) -- (13.51,-48.42);
\draw [black] (31.453,-46.204) arc (-66.90405:-113.09595:17.725);
\fill [black] (31.45,-46.2) -- (30.52,-46.06) -- (30.91,-46.98);
\draw (24.5,-48.12) node [below] {$a_2$};
\draw [black] (31.1,-44.8) -- (17.9,-44.8);
\fill [black] (17.9,-44.8) -- (18.7,-45.3) -- (18.7,-44.3);
\draw (24.5,-44.3) node [above] {$r_2$};
\draw [black] (12.314,-43.285) arc (-125.14577:-234.85423:17.96);
\fill [black] (12.31,-13.91) -- (11.37,-13.97) -- (11.95,-14.78);
\draw (4.19,-28.6) node [left] {$b_{3},\mbox{ }r_{3}$};
\draw [black] (14.9,-29.8) -- (14.9,-41.8);
\fill [black] (14.9,-41.8) -- (15.4,-41) -- (14.4,-41);
\draw (15.4,-35.8) node [right] {$a_{3}$};
\end{tikzpicture} %\label{fig:sup}
}
\caption{Plant $G_2=M_2||M_3$ with synchronizing word $w= r_2  r_{3}$ and supervisor $S_2$ with synchronizing word $w = r_2 r_{3} r_{B_2}$}\label{fig:exb2}
\end{figure}
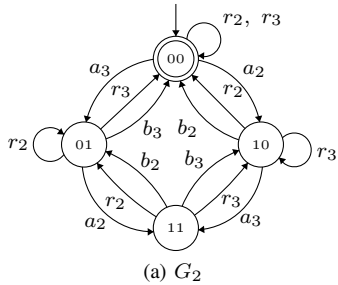
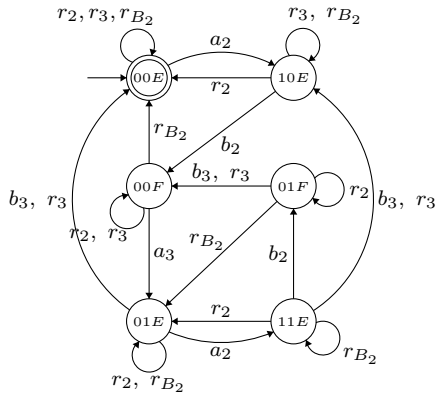
\label{ex:extended1} 
%\end{example}

We propose to execute a synchronizing word of the supervisor, that includes the recovery events of the plant, in order to recover the system. It is straightforward that any word that brings each original automaton, inverse projected to the same event set (union of all sets), to their initial states, leads the composition to its initial state.
So, consider a synchronization word of $S_1$, $w_1=r_1r_2r_{B_1}$. If we execute $w_1$ in $S_1$, Fig.\ref{fig:ex}(b), regardless the original state, we will reach the initial state. Moreover, any synchronizing word, built for $S_1$ resets also its correspondent local plant and buffer. If we run $w_1$ in $G_1$ (Fig.\ref{fig:ex}(a)), and its corresponding natural projections in $M_1$, $M_2$ and $B_1$ (Fig.\ref{fig:model}(a) and (b)) it will lead us to the initial state.
%\end{example}

\begin{figure}[b]
    \centering
    \input{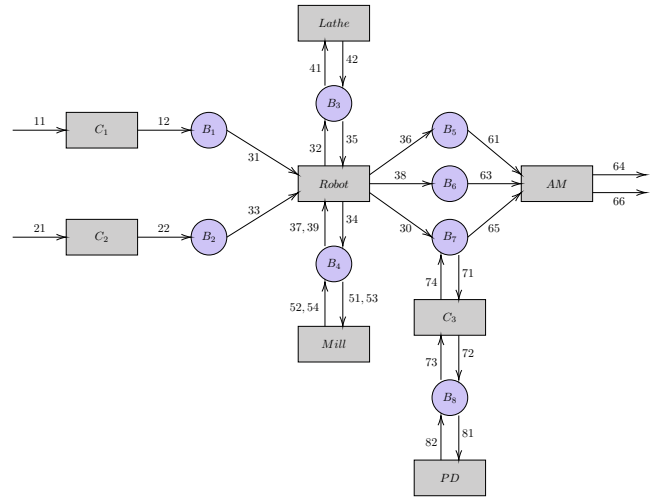}
    \caption{Diagram of the Flexible Manufacturing System}
    \label{fig:SFMdiag}
\end{figure}

\begin{figure*}[t]
    \centering
    \subfloat[$C_1$]{ \noindent\scalebox{0.55}{
        \begin{tikzpicture}[scale=0.2] 
            \tikzstyle{every node}+=[inner sep=0pt]
            \draw [black] (10.9,-13.8) circle (3);
            \draw (10.9,-13.8) node {$0$};
            \draw [black] (10.9,-13.8) circle (2.4);
            \draw [black] (28.6,-13.8) circle (3);
            \draw (28.6,-13.8) node {$1$};
            \draw [black] (13.295,-12.005) arc (120.15825:59.84175:12.848);
            \fill [black] (26.2,-12.01) -- (25.76,-11.17) -- (25.26,-12.04);
            \draw (19.75,-9.77) node [above] {$11$};
            \draw [black] (26.105,-15.455) arc (-62.66259:-117.33741:13.838);
            \fill [black] (13.4,-15.45) -- (13.88,-16.27) -- (14.34,-15.38);
            \draw (19.75,-17.5) node [below] {$12,r_1$};
            \draw [black] (8.404,-12.157) arc (264.38332:-23.61668:2.25);
            \draw (6.35,-7.48) node [above] {$r_1$};
            \fill [black] (10.69,-10.82) -- (11.26,-10.07) -- (10.27,-9.97);
            \draw [black] (3.8,-13.8) -- (7.9,-13.8);
            \fill [black] (7.9,-13.8) -- (7.1,-13.3) -- (7.1,-14.3);
        \end{tikzpicture} }
    }
    \subfloat[$C_2$]{ \noindent\scalebox{0.55}{
        \begin{tikzpicture}[scale=0.2] 
            \tikzstyle{every node}+=[inner sep=0pt]
            \draw [black] (10.9,-13.8) circle (3);
            \draw (10.9,-13.8) node {$0$};
            \draw [black] (10.9,-13.8) circle (2.4);
            \draw [black] (28.6,-13.8) circle (3);
            \draw (28.6,-13.8) node {$1$};
            \draw [black] (13.295,-12.005) arc (120.15825:59.84175:12.848);
            \fill [black] (26.2,-12.01) -- (25.76,-11.17) -- (25.26,-12.04);
            \draw (19.75,-9.77) node [above] {$21$};
            \draw [black] (26.105,-15.455) arc (-62.66259:-117.33741:13.838);
            \fill [black] (13.4,-15.45) -- (13.88,-16.27) -- (14.34,-15.38);
            \draw (19.75,-17.5) node [below] {$22,r_2$};
            \draw [black] (8.404,-12.157) arc (264.38332:-23.61668:2.25);
            \draw (6.35,-7.48) node [above] {$r_2$};
            \fill [black] (10.69,-10.82) -- (11.26,-10.07) -- (10.27,-9.97);
            \draw [black] (3.8,-13.8) -- (7.9,-13.8);
            \fill [black] (7.9,-13.8) -- (7.1,-13.3) -- (7.1,-14.3);
        \end{tikzpicture} }
    }
    \subfloat[$Lathe$]{ \noindent\scalebox{0.55}{
        \begin{tikzpicture}[scale=0.2] 
            \tikzstyle{every node}+=[inner sep=0pt]
            \draw [black] (10.9,-13.8) circle (3);
            \draw (10.9,-13.8) node {$0$};
            \draw [black] (10.9,-13.8) circle (2.4);
            \draw [black] (28.6,-13.8) circle (3);
            \draw (28.6,-13.8) node {$1$};
            \draw [black] (13.295,-12.005) arc (120.15825:59.84175:12.848);
            \fill [black] (26.2,-12.01) -- (25.76,-11.17) -- (25.26,-12.04);
            \draw (19.75,-9.77) node [above] {$41$};
            \draw [black] (26.105,-15.455) arc (-62.66259:-117.33741:13.838);
            \fill [black] (13.4,-15.45) -- (13.88,-16.27) -- (14.34,-15.38);
            \draw (19.75,-17.5) node [below] {$42,r_l$};
            \draw [black] (8.404,-12.157) arc (264.38332:-23.61668:2.25);
            \draw (6.35,-7.48) node [above] {$r_l$};
            \fill [black] (10.69,-10.82) -- (11.26,-10.07) -- (10.27,-9.97);
            \draw [black] (3.8,-13.8) -- (7.9,-13.8);
            \fill [black] (7.9,-13.8) -- (7.1,-13.3) -- (7.1,-14.3);
        \end{tikzpicture} }
    }
    \subfloat[$PD$]{ \noindent\scalebox{0.55}{
        \begin{tikzpicture}[scale=0.2] 
            \tikzstyle{every node}+=[inner sep=0pt]
            \draw [black] (10.9,-13.8) circle (3);
            \draw (10.9,-13.8) node {$0$};
            \draw [black] (10.9,-13.8) circle (2.4);
            \draw [black] (28.6,-13.8) circle (3);
            \draw (28.6,-13.8) node {$1$};
            \draw [black] (13.295,-12.005) arc (120.15825:59.84175:12.848);
            \fill [black] (26.2,-12.01) -- (25.76,-11.17) -- (25.26,-12.04);
            \draw (19.75,-9.77) node [above] {$81$};
            \draw [black] (26.105,-15.455) arc (-62.66259:-117.33741:13.838);
            \fill [black] (13.4,-15.45) -- (13.88,-16.27) -- (14.34,-15.38);
            \draw (19.75,-17.5) node [below] {$82,r_p$};
            \draw [black] (8.404,-12.157) arc (264.38332:-23.61668:2.25);
            \draw (6.35,-7.48) node [above] {$r_p$};
            \fill [black] (10.69,-10.82) -- (11.26,-10.07) -- (10.27,-9.97);
            \draw [black] (3.8,-13.8) -- (7.9,-13.8);
            \fill [black] (7.9,-13.8) -- (7.1,-13.3) -- (7.1,-14.3);
        \end{tikzpicture} }
    }\\
    \subfloat[$Mill$]{ \noindent\scalebox{0.55}{
        \begin{tikzpicture}[scale=0.2] 
            \tikzstyle{every node}+=[inner sep=0pt]
            \draw [black] (10.9,-13.8) circle (3);
            \draw (10.9,-13.8) node {$0$};
            \draw [black] (10.9,-13.8) circle (2.4);
            \draw [black] (28.6,-13.8) circle (3);
            \draw (28.6,-13.8) node {$1$};
            \draw [black] (10.9,-28.7) circle (3);
            \draw (10.9,-28.7) node {$2$};
            \draw [black] (13.498,-12.309) arc (114.21463:65.78537:15.244);
            \fill [black] (26,-12.31) -- (25.48,-11.52) -- (25.07,-12.44);
            \draw (19.75,-10.47) node [above] {$51$};
            \draw [black] (25.847,-14.984) arc (-71.25615:-108.74385:18.974);
            \fill [black] (13.65,-14.98) -- (14.25,-15.71) -- (14.57,-14.77);
            \draw (19.75,-16.49) node [below] {$52,r_m$};
            \draw [black] (8.404,-12.157) arc (264.38332:-23.61668:2.25);
            \draw (5.88,-7.48) node [above] {$r_m$};
            \fill [black] (10.69,-10.82) -- (11.26,-10.07) -- (10.27,-9.97);
            \draw [black] (3.8,-13.8) -- (7.9,-13.8);
            \fill [black] (7.9,-13.8) -- (7.1,-13.3) -- (7.1,-14.3);
            \draw [black] (8.941,-26.444) arc (-147.83654:-212.16346:9.757);
            \fill [black] (8.94,-26.44) -- (8.94,-25.5) -- (8.09,-26.03);
            \draw (6.94,-21.25) node [left] {$53$};
            \draw [black] (12.31,-16.441) arc (21.53938:-21.53938:13.1);
            \fill [black] (12.31,-16.44) -- (12.14,-17.37) -- (13.07,-17);
            \draw (13.72,-21.25) node [right] {$54,r_m$};
        \end{tikzpicture} }
    }
    \subfloat[$C_3$]{ \noindent\scalebox{0.55}{
        \begin{tikzpicture}[scale=0.2] 
            \tikzstyle{every node}+=[inner sep=0pt]
            \draw [black] (10.9,-13.8) circle (3);
            \draw (10.9,-13.8) node {$0$};
            \draw [black] (10.9,-13.8) circle (2.4);
            \draw [black] (28.6,-13.8) circle (3);
            \draw (28.6,-13.8) node {$1$};
            \draw [black] (10.9,-28.7) circle (3);
            \draw (10.9,-28.7) node {$2$};
            \draw [black] (13.498,-12.309) arc (114.21463:65.78537:15.244);
            \fill [black] (26,-12.31) -- (25.48,-11.52) -- (25.07,-12.44);
            \draw (19.75,-10.47) node [above] {$71$};
            \draw [black] (25.847,-14.984) arc (-71.25615:-108.74385:18.974);
            \fill [black] (13.65,-14.98) -- (14.25,-15.71) -- (14.57,-14.77);
            \draw (19.75,-16.49) node [below] {$72,r_3$};
            \draw [black] (8.404,-12.157) arc (264.38332:-23.61668:2.25);
            \draw (5.88,-7.48) node [above] {$r_3$};
            \fill [black] (10.69,-10.82) -- (11.26,-10.07) -- (10.27,-9.97);
            \draw [black] (3.8,-13.8) -- (7.9,-13.8);
            \fill [black] (7.9,-13.8) -- (7.1,-13.3) -- (7.1,-14.3);
            \draw [black] (8.941,-26.444) arc (-147.83654:-212.16346:9.757);
            \fill [black] (8.94,-26.44) -- (8.94,-25.5) -- (8.09,-26.03);
            \draw (6.94,-21.25) node [left] {$73$};
            \draw [black] (12.31,-16.441) arc (21.53938:-21.53938:13.1);
            \fill [black] (12.31,-16.44) -- (12.14,-17.37) -- (13.07,-17);
            \draw (13.72,-21.25) node [right] {$74,r_3$};
        \end{tikzpicture} }
    }
    \subfloat[$AM$]{ \noindent\scalebox{0.55}{
        \begin{tikzpicture}[scale=0.2] 
            \tikzstyle{every node}+=[inner sep=0pt]
            \draw [black] (11.1,-21.4) circle (3);
            \draw (11.1,-21.4) node {$0$};
            \draw [black] (11.1,-21.4) circle (2.4);
            \draw [black] (27.1,-21.4) circle (3);
            \draw (27.1,-21.4) node {$1$};
            \draw [black] (36.5,-13.2) circle (3);
            \draw (36.5,-13.2) node {$2$};
            \draw [black] (36.5,-28.7) circle (3);
            \draw (36.5,-28.7) node {$3$};
            \draw [black] (8.536,-19.865) arc (266.82939:-21.17061:2.25);
            \draw (5.33,-15.23) node [above] {$r_a$};
            \fill [black] (10.76,-18.43) -- (11.3,-17.66) -- (10.3,-17.6);
            \draw [black] (4,-21.4) -- (8.1,-21.4);
            \fill [black] (8.1,-21.4) -- (7.3,-20.9) -- (7.3,-21.9);
            \draw [black] (13.584,-19.731) arc (116.86027:63.13973:12.209);
            \fill [black] (24.62,-19.73) -- (24.13,-18.92) -- (23.68,-19.82);
            \draw (19.1,-17.91) node [above] {$61$};
            \draw [black] (24.374,-22.643) arc (-70.83251:-109.16749:16.064);
            \fill [black] (13.83,-22.64) -- (14.42,-23.38) -- (14.75,-22.43);
            \draw (19.1,-24.03) node [below] {$r_a$};
            \draw [black] (29.36,-19.43) -- (34.24,-15.17);
            \fill [black] (34.24,-15.17) -- (33.31,-15.32) -- (33.97,-16.07);
            \draw (33.31,-17.79) node [below] {$63$};
            \draw [black] (12.582,-18.795) arc (145.6361:70.14766:18.19);
            \fill [black] (12.58,-18.8) -- (13.45,-18.42) -- (12.62,-17.85);
            \draw (18.85,-11.11) node [above] {$64,r_a$};
            \draw [black] (29.47,-23.24) -- (34.13,-26.86);
            \fill [black] (34.13,-26.86) -- (33.81,-25.97) -- (33.19,-26.76);
            \draw (30.29,-25.55) node [below] {$65$};
            \draw [black] (33.844,-30.087) arc (-67.30455:-144.76492:17.652);
            \fill [black] (12.61,-23.99) -- (12.67,-24.93) -- (13.48,-24.35);
            \draw (19.23,-31.46) node [below] {$66,r_a$};
        \end{tikzpicture} }
    }
    \subfloat[$Robot$]{ \noindent\scalebox{0.55}{
        \begin{tikzpicture}[scale=0.2] 
            \tikzstyle{every node}+=[inner sep=0pt]
            \draw [black] (36.4,-14.6) circle (3);
            \draw (36.4,-14.6) node {$0$};
            \draw [black] (36.4,-14.6) circle (2.4);
            \draw [black] (36.4,-29.8) circle (3);
            \draw (36.4,-29.8) node {$3$};
            \draw [black] (52.5,-29.8) circle (3);
            \draw (52.5,-29.8) node {$4$};
            \draw [black] (20.2,-29.8) circle (3);
            \draw (20.2,-29.8) node {$2$};
            \draw [black] (16.7,-14.6) circle (3);
            \draw (16.7,-14.6) node {$1$};
            \draw [black] (55,-14.6) circle (3);
            \draw (55,-14.6) node {$5$};
            \draw [black] (35.077,-11.92) arc (234:-54:2.25);
            \draw (36.4,-7.35) node [above] {$r_r$};
            \fill [black] (37.72,-11.92) -- (38.6,-11.57) -- (37.79,-10.98);
            \draw [black] (32.6,-8.9) -- (34.74,-12.1);
            \fill [black] (34.74,-12.1) -- (34.71,-11.16) -- (33.88,-11.72);
            \draw [black] (37.251,-17.474) arc (12.5487:-12.5487:21.751);
            \fill [black] (37.25,-26.93) -- (37.91,-26.25) -- (36.94,-26.04);
            \draw (38.27,-22.2) node [right] {$35$};
            \draw [black] (35.645,-26.898) arc (-168.93002:-191.06998:24.47);
            \fill [black] (35.65,-17.5) -- (35,-18.19) -- (35.98,-18.38);
            \draw (34.69,-22.2) node [left] {$36,r_r$};
            \draw [black] (38.58,-16.66) -- (50.32,-27.74);
            \fill [black] (50.32,-27.74) -- (50.08,-26.83) -- (49.39,-27.55);
            \draw (42.93,-22.68) node [below] {$37$};
            \draw [black] (18.522,-12.227) arc (134.92498:45.07502:11.369);
            \fill [black] (18.52,-12.23) -- (19.44,-12.02) -- (18.73,-11.31);
            \draw (26.55,-8.41) node [above] {$31$};
            \draw [black] (37.887,-12.008) arc (141.54586:38.45414:9.977);
            \fill [black] (53.51,-12.01) -- (53.41,-11.07) -- (52.62,-11.69);
            \draw (45.7,-7.74) node [above] {$39$};
            \draw [black] (34.21,-16.65) -- (22.39,-27.75);
            \fill [black] (22.39,-27.75) -- (23.31,-27.56) -- (22.63,-26.84);
            \draw (26.78,-21.72) node [above] {$33$};
            \draw [black] (39.363,-15.047) arc (76.46596:16.82809:17.315);
            \fill [black] (39.36,-15.05) -- (40.02,-15.72) -- (40.26,-14.75);
            \draw (50.38,-18.81) node [above] {$38,r_r$};
            \draw [black] (20.547,-26.825) arc (167.80914:98.54268:15.517);
            \fill [black] (33.41,-14.76) -- (32.54,-14.38) -- (32.69,-15.37);
            \draw (22.1,-18.3) node [above] {$34,r_r$};
            \draw [black] (19.532,-13.616) arc (105.81689:74.18311:25.747);
            \fill [black] (33.57,-13.62) -- (32.93,-12.92) -- (32.66,-13.88);
            \draw (26.55,-12.14) node [above] {$32,r_r$};
            \draw [black] (39.015,-13.138) arc (113.97768:66.02232:16.45);
            \fill [black] (39.02,-13.14) -- (39.95,-13.27) -- (39.54,-12.36);
            \draw (45.7,-11.22) node [above] {$30,r_r$};
        \end{tikzpicture} }
    }
    \caption{Models of the plants of the Flexible Manufacturing System}
    \label{fig:SFM_plantas}
\end{figure*}
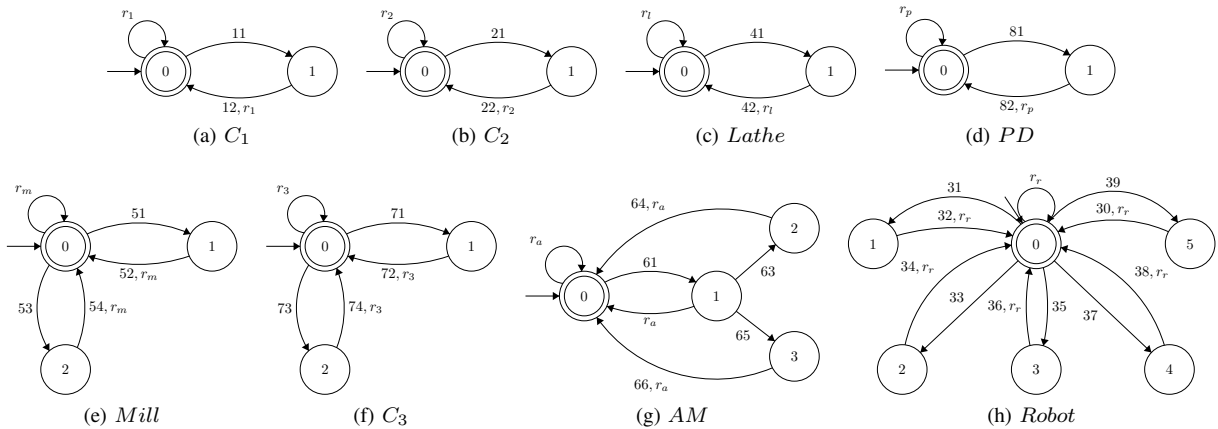

Now, we show how the desynchronization may happen and how to use the synchronizing word to solve it. If a sequence $s = a_1 \, b_1 \, a_2 \, a_1 \, b_2 a_3 \in\overline{S_1}||\overline{S_2}$ is executed, %. If we execute $s$, 
states (10) of $G_1$ and (10E) in $S_1$ are reached. Consider now that event $b_1$ happens in the plant but the supervisor does not observe it (a malicious agent has hidden such occurrence from the control). Automaton $G_1$ would move to state (00), and $S_1$ would stay at (10E), since $b_1$ is not observed. At this point, the system and control are desynchronized. Supervisor $S_1$ disables $a_2$ in state (10E) until $b_1$ is observed and $a_1$ is considered by the supervisor as not possible and the control systems reaches a deadlock. If we apply a synchronizing word of $S_1$, $w=r_1r_2r_{B_1}$,  $S_1$ and $G_1$ will move to the initial state.
%\label{ex:resetS1}
%\end{example}

%In the following, we highlight the localization of the recovery. 

%\begin{example} Consider the same problem as in Example \ref{ex:resetS1}. Now, 
Next, we analyze the effect that resynchronizing $S_1$ and $G_1$ causes in $S_2$ and $G_2$. The word $w=r_1r_2r_{B_1}$, resets $M_2$ that is a system that is shared by $G_1$ and $G_2$. Supervisor $S_2$ and plant $G_2$ will see $w'=r_2$ and will be kept in the same state (selfloops with $r_2$ in states (01) of $G_2$ and (01E) of $S_2$), namely the recovery is localized for $S_1$ and $G_1$ and $S_2$ and $G_2$ are kept as before.

%\end{example}
In the following, we present a case study of a Flexible Manufacturing System (FMS), that illustrates the application of the security recovery procedure in a larger system.

%-------------------------------------------------------------
\subsection{Flexible Manufacturing Systems}\label{sec:app}
%-------------------------------------------------------------

The Flexible Manufacturing System (FMS) \cite{Queiroz2005} is composed of eight machines: three conveyors ($C_1$, $C_2$ and $C_3$), a mill, a lathe, a robot, a painting device (PD) and an assembly machine (AM), as shown on Figure~\ref{fig:SFMdiag}.

The automata for the subsystems, modeled as synchronizing automata w.r.t. the initial state with reset event, are shown in Figure~\ref{fig:SFM_plantas}. The safety specifications, that restrict the system to avoid underflow and overflow in the buffers, are presented in Figure~\ref{fig:SFM_espec}.

A total of 15 recovery events were created, one event for each plant and specification. Monolithic and  Local Modular Supervisory Control were applied in order to obtain a set of controllable, nonblocking and nonconflicting supervisors. The synthesis of supervisors was done using software \textit{UltraDES} \cite{alves2017ultrades}. The classical algorithms were adapted to cope with conditions of Corollary~\ref{cor:sct} and Definition~\ref{def:nonconflictReset}. 

The application of the monolithic approach leads to a single supervisor with $70,272$ states, $1,434,804$ transitions, being $1,054,080$ of these transitions triggered by reset events. The shortest synchronizing word has $16$ events and, if used,  will reset the system (all equipment and supervisors) to the initial state.

Using the same automata of figures \ref{fig:SFM_plantas} and \ref{fig:SFM_espec}, the resulting supervisors are conflicting (as they were in the solution without reset events). The conflict solution is to compose specifications $ E_7 $ and $ E_8 $ as a single local specification, generating 7 nonconflicting supervisors, as presented in Table~\ref{tab:sups}.

\begin{figure*}[t]
    \centering
      \subfloat[$E_1$]{ \noindent\scalebox{0.55}{
        \begin{tikzpicture}[scale=0.2] 
            \tikzstyle{every node}+=[inner sep=0pt]
            \draw [black] (10.9,-13.8) circle (3);
            \draw (10.9,-13.8) node {$0$};
            \draw [black] (10.9,-13.8) circle (2.4);
            \draw [black] (28.6,-13.8) circle (3);
            \draw (28.6,-13.8) node {$1$};
            \draw [black] (13.295,-12.005) arc (120.15825:59.84175:12.848);
            \fill [black] (26.2,-12.01) -- (25.76,-11.17) -- (25.26,-12.04);
            \draw (19.75,-9.77) node [above] {$12$};
            \draw [black] (26.105,-15.455) arc (-62.66259:-117.33741:13.838);
            \fill [black] (13.4,-15.45) -- (13.88,-16.27) -- (14.34,-15.38);
            \draw (19.75,-17.5) node [below] {$31,r_{B1}$};
            \draw [black] (8.404,-12.157) arc (264.38332:-23.61668:2.25);
            \draw (6.35,-7.48) node [above] {$r_{B1}$};
            \fill [black] (10.69,-10.82) -- (11.26,-10.07) -- (10.27,-9.97);
            \draw [black] (3.8,-13.8) -- (7.9,-13.8);
            \fill [black] (7.9,-13.8) -- (7.1,-13.3) -- (7.1,-14.3);
        \end{tikzpicture} }
    }
    \subfloat[$E_2$]{ \noindent\scalebox{0.55}{
        \begin{tikzpicture}[scale=0.2] 
            \tikzstyle{every node}+=[inner sep=0pt]
            \draw [black] (10.9,-13.8) circle (3);
            \draw (10.9,-13.8) node {$0$};
            \draw [black] (10.9,-13.8) circle (2.4);
            \draw [black] (28.6,-13.8) circle (3);
            \draw (28.6,-13.8) node {$1$};
            \draw [black] (13.295,-12.005) arc (120.15825:59.84175:12.848);
            \fill [black] (26.2,-12.01) -- (25.76,-11.17) -- (25.26,-12.04);
            \draw (19.75,-9.77) node [above] {$22$};
            \draw [black] (26.105,-15.455) arc (-62.66259:-117.33741:13.838);
            \fill [black] (13.4,-15.45) -- (13.88,-16.27) -- (14.34,-15.38);
            \draw (19.75,-17.5) node [below] {$33,r_{B2}$};
            \draw [black] (8.404,-12.157) arc (264.38332:-23.61668:2.25);
            \draw (6.35,-7.48) node [above] {$r_{B2}$};
            \fill [black] (10.69,-10.82) -- (11.26,-10.07) -- (10.27,-9.97);
            \draw [black] (3.8,-13.8) -- (7.9,-13.8);
            \fill [black] (7.9,-13.8) -- (7.1,-13.3) -- (7.1,-14.3);
        \end{tikzpicture} }
    }
    \subfloat[$E_5$]{ \noindent\scalebox{0.55}{
        \begin{tikzpicture}[scale=0.2] 
            \tikzstyle{every node}+=[inner sep=0pt]
            \draw [black] (10.9,-13.8) circle (3);
            \draw (10.9,-13.8) node {$0$};
            \draw [black] (10.9,-13.8) circle (2.4);
            \draw [black] (28.6,-13.8) circle (3);
            \draw (28.6,-13.8) node {$1$};
            \draw [black] (13.295,-12.005) arc (120.15825:59.84175:12.848);
            \fill [black] (26.2,-12.01) -- (25.76,-11.17) -- (25.26,-12.04);
            \draw (19.75,-9.77) node [above] {$36$};
            \draw [black] (26.105,-15.455) arc (-62.66259:-117.33741:13.838);
            \fill [black] (13.4,-15.45) -- (13.88,-16.27) -- (14.34,-15.38);
            \draw (19.75,-17.5) node [below] {$61,r_{B5}$};
            \draw [black] (8.404,-12.157) arc (264.38332:-23.61668:2.25);
            \draw (6.35,-7.48) node [above] {$r_{B5}$};
            \fill [black] (10.69,-10.82) -- (11.26,-10.07) -- (10.27,-9.97);
            \draw [black] (3.8,-13.8) -- (7.9,-13.8);
            \fill [black] (7.9,-13.8) -- (7.1,-13.3) -- (7.1,-14.3);
        \end{tikzpicture} }
    }
    \subfloat[$E_6$]{ \noindent\scalebox{0.55}{
        \begin{tikzpicture}[scale=0.2] 
            \tikzstyle{every node}+=[inner sep=0pt]
            \draw [black] (10.9,-13.8) circle (3);
            \draw (10.9,-13.8) node {$0$};
            \draw [black] (10.9,-13.8) circle (2.4);
            \draw [black] (28.6,-13.8) circle (3);
            \draw (28.6,-13.8) node {$1$};
            \draw [black] (13.295,-12.005) arc (120.15825:59.84175:12.848);
            \fill [black] (26.2,-12.01) -- (25.76,-11.17) -- (25.26,-12.04);
            \draw (19.75,-9.77) node [above] {$38$};
            \draw [black] (26.105,-15.455) arc (-62.66259:-117.33741:13.838);
            \fill [black] (13.4,-15.45) -- (13.88,-16.27) -- (14.34,-15.38);
            \draw (19.75,-17.5) node [below] {$63,r_{B6}$};
            \draw [black] (8.404,-12.157) arc (264.38332:-23.61668:2.25);
            \draw (6.35,-7.48) node [above] {$r_{B6}$};
            \fill [black] (10.69,-10.82) -- (11.26,-10.07) -- (10.27,-9.97);
            \draw [black] (3.8,-13.8) -- (7.9,-13.8);
            \fill [black] (7.9,-13.8) -- (7.1,-13.3) -- (7.1,-14.3);
        \end{tikzpicture} }
    }\\
    \subfloat[$E_3$]{ \noindent\scalebox{0.55}{
        \begin{tikzpicture}[scale=0.2] 
            \tikzstyle{every node}+=[inner sep=0pt]
            \draw [black] (10.9,-13.8) circle (3);
            \draw (10.9,-13.8) node {$0$};
            \draw [black] (10.9,-13.8) circle (2.4);
            \draw [black] (28.6,-13.8) circle (3);
            \draw (28.6,-13.8) node {$1$};
            \draw [black] (10.9,-28.7) circle (3);
            \draw (10.9,-28.7) node {$2$};
            \draw [black] (13.498,-12.309) arc (114.21463:65.78537:15.244);
            \fill [black] (26,-12.31) -- (25.48,-11.52) -- (25.07,-12.44);
            \draw (19.75,-10.47) node [above] {$32$};
            \draw [black] (25.847,-14.984) arc (-71.25615:-108.74385:18.974);
            \fill [black] (13.65,-14.98) -- (14.25,-15.71) -- (14.57,-14.77);
            \draw (19.75,-16.49) node [below] {$41,r_{B3}$};
            \draw [black] (8.404,-12.157) arc (264.38332:-23.61668:2.25);
            \draw (5.88,-7.48) node [above] {$r_{B3}$};
            \fill [black] (10.69,-10.82) -- (11.26,-10.07) -- (10.27,-9.97);
            \draw [black] (3.8,-13.8) -- (7.9,-13.8);
            \fill [black] (7.9,-13.8) -- (7.1,-13.3) -- (7.1,-14.3);
            \draw [black] (8.941,-26.444) arc (-147.83654:-212.16346:9.757);
            \fill [black] (8.94,-26.44) -- (8.94,-25.5) -- (8.09,-26.03);
            \draw (6.94,-21.25) node [left] {$42$};
            \draw [black] (12.31,-16.441) arc (21.53938:-21.53938:13.1);
            \fill [black] (12.31,-16.44) -- (12.14,-17.37) -- (13.07,-17);
            \draw (13.72,-21.25) node [right] {$35,r_{B3}$};
        \end{tikzpicture} }
    }
    \subfloat[$E_7$]{ \noindent\scalebox{0.55}{
        \begin{tikzpicture}[scale=0.2] 
            \tikzstyle{every node}+=[inner sep=0pt]
            \draw [black] (10.9,-13.8) circle (3);
            \draw (10.9,-13.8) node {$0$};
            \draw [black] (10.9,-13.8) circle (2.4);
            \draw [black] (28.6,-13.8) circle (3);
            \draw (28.6,-13.8) node {$1$};
            \draw [black] (10.9,-28.7) circle (3);
            \draw (10.9,-28.7) node {$2$};
            \draw [black] (13.498,-12.309) arc (114.21463:65.78537:15.244);
            \fill [black] (26,-12.31) -- (25.48,-11.52) -- (25.07,-12.44);
            \draw (19.75,-10.47) node [above] {$30$};
            \draw [black] (25.847,-14.984) arc (-71.25615:-108.74385:18.974);
            \fill [black] (13.65,-14.98) -- (14.25,-15.71) -- (14.57,-14.77);
            \draw (19.75,-16.49) node [below] {$71,r_{B7}$};
            \draw [black] (8.404,-12.157) arc (264.38332:-23.61668:2.25);
            \draw (5.88,-7.48) node [above] {$r_{B7}$};
            \fill [black] (10.69,-10.82) -- (11.26,-10.07) -- (10.27,-9.97);
            \draw [black] (3.8,-13.8) -- (7.9,-13.8);
            \fill [black] (7.9,-13.8) -- (7.1,-13.3) -- (7.1,-14.3);
            \draw [black] (8.941,-26.444) arc (-147.83654:-212.16346:9.757);
            \fill [black] (8.94,-26.44) -- (8.94,-25.5) -- (8.09,-26.03);
            \draw (6.94,-21.25) node [left] {$74$};
            \draw [black] (12.31,-16.441) arc (21.53938:-21.53938:13.1);
            \fill [black] (12.31,-16.44) -- (12.14,-17.37) -- (13.07,-17);
            \draw (13.72,-21.25) node [right] {$65,r_{B7}$};
        \end{tikzpicture} }
    }
    \subfloat[$E_8$]{ \noindent\scalebox{0.55}{
        \begin{tikzpicture}[scale=0.2] 
            \tikzstyle{every node}+=[inner sep=0pt]
            \draw [black] (10.9,-13.8) circle (3);
            \draw (10.9,-13.8) node {$0$};
            \draw [black] (10.9,-13.8) circle (2.4);
            \draw [black] (28.6,-13.8) circle (3);
            \draw (28.6,-13.8) node {$1$};
            \draw [black] (10.9,-28.7) circle (3);
            \draw (10.9,-28.7) node {$2$};
            \draw [black] (13.498,-12.309) arc (114.21463:65.78537:15.244);
            \fill [black] (26,-12.31) -- (25.48,-11.52) -- (25.07,-12.44);
            \draw (19.75,-10.47) node [above] {$72$};
            \draw [black] (25.847,-14.984) arc (-71.25615:-108.74385:18.974);
            \fill [black] (13.65,-14.98) -- (14.25,-15.71) -- (14.57,-14.77);
            \draw (19.75,-16.49) node [below] {$81,r_{B8}$};
            \draw [black] (8.404,-12.157) arc (264.38332:-23.61668:2.25);
            \draw (5.88,-7.48) node [above] {$r_{B8}$};
            \fill [black] (10.69,-10.82) -- (11.26,-10.07) -- (10.27,-9.97);
            \draw [black] (3.8,-13.8) -- (7.9,-13.8);
            \fill [black] (7.9,-13.8) -- (7.1,-13.3) -- (7.1,-14.3);
            \draw [black] (8.941,-26.444) arc (-147.83654:-212.16346:9.757);
            \fill [black] (8.94,-26.44) -- (8.94,-25.5) -- (8.09,-26.03);
            \draw (6.94,-21.25) node [left] {$82$};
            \draw [black] (12.31,-16.441) arc (21.53938:-21.53938:13.1);
            \fill [black] (12.31,-16.44) -- (12.14,-17.37) -- (13.07,-17);
            \draw (13.72,-21.25) node [right] {$73,r_{B8}$};
        \end{tikzpicture} }
    }
    \subfloat[$E_4$]{ \noindent\scalebox{0.55}{
        \begin{tikzpicture}[scale=0.2] 
            \tikzstyle{every node}+=[inner sep=0pt]
            \draw [black] (36.4,-14.6) circle (3);
            \draw (36.4,-14.6) node {$0$};
            \draw [black] (36.4,-14.6) circle (2.4);
            \draw [black] (50.3,-27.7) circle (3);
            \draw (50.3,-27.7) node {$3$};
            \draw [black] (36.4,-27.7) circle (3);
            \draw (36.4,-27.7) node {$2$};
            \draw [black] (22,-27.7) circle (3);
            \draw (22,-27.7) node {$1$};
            \draw [black] (36.026,-11.635) arc (214.9249:-73.0751:2.25);
            \draw (41.74,-7.61) node [above] {$r_{B4}$};
            \fill [black] (38.53,-12.5) -- (39.47,-12.45) -- (38.9,-11.63);
            \draw [black] (31.9,-10.1) -- (34.28,-12.48);
            \fill [black] (34.28,-12.48) -- (34.07,-11.56) -- (33.36,-12.27);
            \draw [black] (38.99,-16.112) arc (57.33085:36.06346:35.919);
            \fill [black] (48.64,-25.2) -- (48.57,-24.26) -- (47.76,-24.85);
            \draw (45.75,-19.73) node [above] {$34$};
            \draw [black] (39.337,-14.04) arc (92.85124:0.54307:10.811);
            \fill [black] (39.34,-14.04) -- (40.16,-14.5) -- (40.11,-13.5);
            \draw (53.77,-16.49) node [above] {$51,53,r_{B4}$};
            \draw [black] (23.673,-25.211) arc (143.6119:120.97507:34.755);
            \fill [black] (23.67,-25.21) -- (24.55,-24.86) -- (23.75,-24.27);
            \draw (26.75,-19.63) node [above] {$54$};
            \draw [black] (37.574,-17.354) arc (16.61227:-16.61227:13.278);
            \fill [black] (37.57,-24.95) -- (38.28,-24.32) -- (37.32,-24.04);
            \draw (38.63,-21.15) node [right] {$52$};
            \draw [black] (35.101,-25.004) arc (-161.38487:-198.61513:12.075);
            \fill [black] (35.1,-17.3) -- (34.37,-17.89) -- (35.32,-18.21);
            \draw (33.97,-21.15) node [left] {$37,r_{B4}$};
            \draw [black] (21.344,-24.783) arc (-175.5504:-279.86263:10.455);
            \fill [black] (33.56,-13.67) -- (32.86,-13.04) -- (32.68,-14.03);
            \draw (19.76,-15.75) node [above] {$39,r_{B4}$};
        \end{tikzpicture} }
    }
    \caption{Specification of the Flexible Manufacturing System}
    \label{fig:SFM_espec}
\end{figure*}
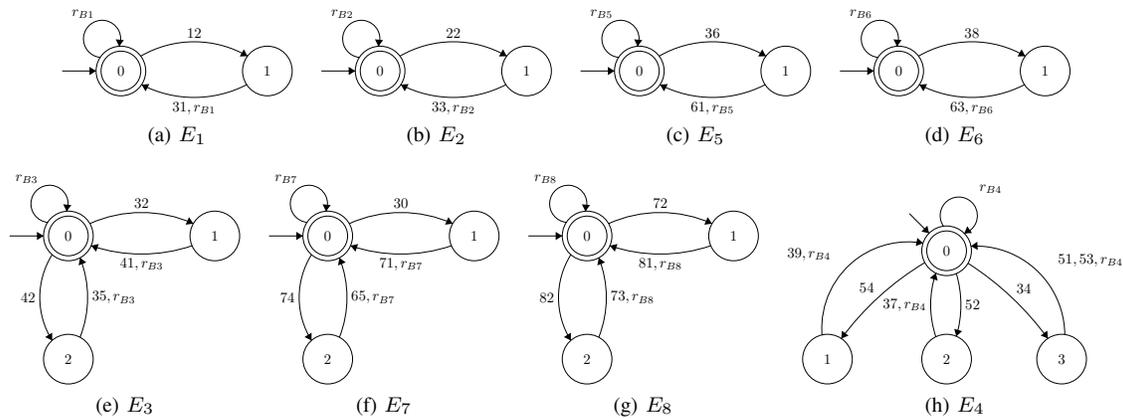

%The information about the resulting supervisors is shown on Table~\ref{tab:sups}.

\begin{table}[h]

\caption{Supervisors of the Flexible Manufacturing System ($S_j$ relates to $E_j$), $|w|$ is the size of one of the shortest synchronizing words.}
\centering
\begin{tabular}{|c|c|c|c|c|c|}
\hline
\textbf{Sup.} & \textbf{Plants}  & \textbf{States} & \textbf{Trans.} & \textbf{Trans. $\Sigma_r$} & \textbf{$|w|$}\\ \hline
$S_1$               & $C_1,Robot$          & 18              & 94                   & 36   & 3                    \\ \hline
$S_2$               & $C_2,Robot$          & 18              & 94                   & 54   & 3                    \\ \hline
$S_3$               & $Mill, Robot$        & 18              & 90                   & 54   & 3                    \\ \hline
$S_4$               & $Lathe, Robot$       & 21              & 105                  & 63   & 3                    \\ \hline
$S_5$               & $Robot, AM$          & 44              & 253                  & 132  & 3                    \\ \hline
$S_6$               & $Robot, AM$          & 44              & 253                  & 132  & 3                    \\ \hline
$S_{7, 8}$          & $Robot, AM,$         & 260             & 2441                 & 1560 & 6                    \\   
                    & $C_3, PD$            &                 &                      &      &                      \\ \hline
%                        \\ \hline
\end{tabular}
\label{tab:sups}
\end{table}

Each supervisor has its own synchronizing words that allow to recover the whole system applying a partial reset. A consequence of the partial reset is that the closed loop behavior after the recovery is not led to the global initial state, but to an intermediate state where the resetted subsystems are in the initial states while the rest of the subsystems are kept untouched.

If a failure happens in one subsystem, the Mill for instance, the recovery in the two approaches will lead to different situations. The execution of a monolithic synchronizing word will take the system to the global initial state. If the Local Modular Supervisory Control is used, only the synchronizing word of supervisor $S_3$ has to be run and only the components (Robot and Mill) are going to be reinitialized. Since the robot is part of other supervisors, transitions are going to be executed in each one of the supervisors, in order to resynchronize with the new state of the robot. The states of the other subsystems will be kept the same.

\section{Conclusions}\label{sec:conclusions}

This paper presents a secure recovery procedure based on concepts of synchronizing automata and Supervisory Control Theory. This approach can be used to restore systems damaged by external attacks or temporary unobservability of some events. 

We show under what conditions the synchronicity of the plants and specifications is inherited by the composed system and supervisor and expand these results to the Local Modular Supervisory Control. Then, We present a simple modification applied to the classical modeling of systems and specifications, to include recovery events, in order to turn a regular automaton into a synchronizing one. While the monolithic approach will lead to a complete reset of the system, the application of the techniques together with the Local Modular Supervisory Control allows partial recovery of the system, resetting only the local plants and supervisors affected by the desynchronization. 

Our next steps are to adapt the recovery procedure, allowing partial resets, even in the monolithic approach; define reset procedures that do not necessarily lead to the initial state; and apply the presented recovery techniques to systems that are already inherently synchronizing.

\balance
\bibliographystyle{IEEEtran}
\bibliography{biblio}

\iffalse
\begin{IEEEbiography}%[{\includegraphics[width=1in,height=1.25in,clip,keepaspectratio]{mshell}}]
{Lucas V. R. Alves}

was born in Belo Horizonte, in 1990. He received his M.Sc in Electrical Engineering from the Federal University of Minas Gerais in 2016, and is currently working toward the Ph.D. degree.

His main research interests include production planning and control of discrete event systems and management and optimization of transport systems.
\end{IEEEbiography}

\begin{IEEEbiography}%[{\includegraphics[width=1in,height=1.25in,clip,keepaspectratio]{mshell}}]
{Patrícia N. Pena}

\end{IEEEbiography}
\fi

\end{document}